\newcommand{\mR}{\mathbb{R}}
\newcommand{\mc}{\mathcal}
\newcommand{\tha}{\theta}
\renewcommand{\leq}{\leqslant}
\renewcommand{\geq}{\geqslant}
\newtheorem{prop}{Proposition}
\newtheorem{Def}{Definition}
\newtheorem{assume}{Assumption}
\DeclareMathOperator{\supp}{supp}
\newcommand{\kets}{|s\rangle}
\newcommand{\ket}{|\varphi\rangle}
\newcommand{\Ket}{|\Phi\rangle}
\newcommand{\Bra}{\langle \Phi |}
\newcommand{\brask}{\langle s\varphi_k|}
\newcommand{\ketsk}{|s\varphi_k\rangle}
\newcommand{\ketpsi}{|\psi\rangle}
\newcommand{\braskp}{\langle s\varphi_{k'}|}
\newcommand{\ketskp}{|s\varphi_{k'}\rangle}
\newcommand{\tr}{\text{Tr}}
\begin{document}

\title{Game of travesty: decoy-based psychological cyber deception for proactive human agents}
\author{Yinan Hu, {\it Student Member} and Quanyan Zhu, {\it Senior Member}}
\date{\today}
\maketitle
\begin{abstract}
\noindent The concept of cyber deception has been receiving emerging attention. The development of cyber defensive deception techniques requires interdisciplinary work, among which cognitive science plays an important role. In this work, we adopt a signaling game framework between a defender and a human agent to develop a cyber defensive deception protocol that takes advantage of the cognitive biases of human decision-making using quantum decision theory to combat insider attacks (IA). The defender deceives an inside human attacker by luring him to access decoy sensors via generators producing perceptions of classical signals to manipulate the human attacker's psychological state of mind. Our results reveal that even without changing the classical traffic data, strategically designed generators can result in a worse performance for defending against insider attackers in identifying decoys than the ones in the deceptive scheme without generators, which generate random information based on input signals. The proposed framework leads to fundamental theories in designing more effective signaling schemes. 
\end{abstract}
\setcounter{page}{1}

\section{Introduction}
Cyber deception has been a growing class of proactive defense techniques over the past several decades that contribute to combat increasingly intelligent, stealthy, and sophisticated attacks. Important cyber deception technologies including moving target defense \cite{jajodia2011moving_target_defense}, honey-x \cite{spitzner2003honeynet} (such as honeypots, honeytokens), etc help defenders reach a better security outcome against ever-growingly sophisticated attacks and threats, among which advanced persistent threats (APT) and insider threats \cite{collins2016insider_threat_guide} serve as two typical examples.  Reports have revealed that cyber deception technologies have reduced the cost arising from data breaches by $51\%$ in 2022 \cite{cyber_deception_report2022}.  
Cyber deception techniques take advantage of the human aspects to achieve two-fold purposes: one is to conceal the truth, and the other is to reveal the false.  The ultimate goal of applying defensive cyber deception techniques is to delay, stop, or interrupt attacks. Many techniques can achieve the concept of deception: dazzling, mimicking \cite{murphy2010application_deception_mimicking}, inventing, decoying \cite{carroll2011game_deception_decoy}. 
Useful defensive deception protocols characterize the strategic interactions among three classes of agents: defenders, users, and adversaries. A useful framework to design cyber deception mechanisms needs to capture  several main features. First, the defender must strategically treat users and adversaries with different purposes. In general, the defender should enhance the efficacy of access for a normal user and reduce the efficacy of access for adversaries. In addition, a sophisticated adversary behaves intelligently but also suffers from limitations arising from human aspects. 

Inter-disciplinary work is needed to help to develop next-generation deception techniques incorporating psychological models to characterize the behaviors of human attackers and system users. The interdisciplinary nature of the concept of deception constitutes a major challenge for researchers in building cyber deceptive defense systems.
%\paragraph{Formulating threats using game theories}

Many game-theoretical models \cite{manshaei2013game_network_security} characterize the methods and mechanisms in the concept of deception in detection frameworks in cyber security. One major limitation of applying game-theoretical formulation solely to formulate threats is that such models often assume all agents are fully rational, while in real practices the behaviors of attackers and defenders often deviate from rationality \cite{manshaei2013game_network_security}, in part because devices in the network are operated by humans. 

One aspect of making breakthroughs in research in deception techniques is to adopt more accurate models in cognition to form more accurate predictions of the human attacker's behaviors. Such a direction is called cyber-psychology.  
 
Studies have shown that human reveals bounded rationality in decision-making due to a variety of cognitive biases. As a result, biases have played a cornerstone component in a successful deception mechanism not only in cyber security but also in social sciences. 

There are some other phenomena in cognitive science such as order effect, disjunction effect, violation of the total law of probability, etc, that are missed by previous deception mechanisms. New models need to be raised to characterize those phenomena. Game-theoretical models \cite{manshaei2013game_network_security} assume that both the sensor and the receiver present full rationality and may lead to a more conservative strategy for the defensive systems that manipulate data incoming to the sensors.  One difference between human decision-making theories and general decision theories is that human suffers from cognitive bias of various kinds such as margin effect, order effect, etc, that incur human agents to arrive at choices leading to suboptimal outcomes.  

There is literature catching the cognitive biases of humans arising from risk preferences and applying alternative frameworks such as quantal response \cite{mckelvey1998quantal_response} or prospect theory\cite{tversky1992advances_prospect_theory}. In behavioral economics \cite{mankiw2020principles_economics}, human's bounded rationality is presented in a variety of ways. Recently, there are experimental studies \cite{ferguson2021decoy_psychology} where experts play attackers who aim to hack into systems to gather information and aim to avoid decoys, while the defense system adopts cyber deception and cyber psychology techniques to prevent real systems from being attacked and credentials from being stolen.
The goal of the experimental study is to verify/invalidate the two hypotheses: one, defensive cyber tools and psychological deception impede attackers who seek to penetrate computer systems and exfiltrate information; and two, defensive deception tools are effective even if an attacker is aware of their use.

Experimental data essentially show that both hypotheses are true. But there is a lack of theories explaining why they are true. Constructing theories characterizing human agents' behaviors that take advantage of bounded rationality is beneficial in understanding human behavior to counteract them.

The quantum decision theories \cite{busemeyer2012quantum_cognition} catch the bounded rationality arising from order effect, disjunct effect, and violation of the total law of probability.  We are not arguing that the human brain acts like a quantum computer in the physical sense. Instead, we argue that the quantum decision theory functions as a \textit{generative parsimonious generative black-box model} for human's decision-making processes that have been corroborated by experiments such as \cite{kvam2021temporal_oscillation_quantum_preference}.
In this paper, we consider a scenario where sensors generate manipulated data for receivers, who are human agents. We assume the sensors constitute as part of the defense systems and the human agents want to attack sensors. The defensive systems aim at deceiving the human agents to mislead them to attack the wrong nodes. Such a system is called human-sensor system in cybersecurity. 

 The purpose of this paper is to develop an appropriate framework decoying as a method of cyber deception to characterize the sensor's manipulation of the traffic data and the attacker's strategies for attacking the sensors.  The challenge is to consider the receivers are human agents who make decisions suffering from a variety of bounded rationality arising from cognitive biases such as marginal effect, order effect, violation of the total law of probability, etc.  

% In particular, we consider a simple scenario where the defensive system consists of two sensors, one is normal and the other is honeypot, which becomes common knowledge in a sensor (defender)-human (attacker) system. The human attacker, is connected to one of the sensors.  The human agent takes data from the defensive system, aiming to recognize the normal sensor to attack. The defensive system strategically manipulates the data so as to deceive the attacker to focus on the decoy. 

In this paper, we propose the `traversty game' (TG) framework as a signaling game framework based on quantum information for constructing a cyber defensive deception to bring forth a desirable security scenario where the defender interacts with human adversaries to reduce the efficacy of the attacks by taking advantage of bounded rationality in human's decision-making process. The defender, or the deceptive defensive system, has a private type that characterizes the state of the system. That is, what connects the network and the human agent could be a regular network sensor or a decoy. It is common knowledge that a normal sensor and a decoy produce traffic data whose message volumes obey different distributions. The defensive system contains a sensor and a generator. The sensor collects original data traffic from the network and distorts data traffic. The generator is a mechanism that produces verbal messages to manipulate the human agent's perception of the classical traffic data. The cyber deceptive defensive system associates classical (maybe distorted) traffic data with the manipulated perception on the traffic data to deliver to human agents composite signals, which are characterized as `prospect states' 
\cite{yukalov2011decision_prospect_entanglement} in quantum decision theory \cite{busemeyer2012quantum_cognition}. Upon receiving the prospect states, the human agent (receiver) formulates the true state of the defensive system (a normal sensor or a decoy) into a quantum hypothesis testing problem and designs optimal prospect operator-valued measurements to minimize his weighted risk. The human agent then decides whether to access the system or not. The goal of the human agent, no matter his type, is to access sensors and avoid decoys. We thus formulate the human agent's objective as the weighted Bayesian risk that depends on the misdetection error rate and false alarm error rate. After the generator is implemented, both the defensive system and the human agent update their intelligence of each other through the Bayes' rule. The optimal behavior of the defensive deceptive system is to guide the human agent to access the true sensor while preventing it from accessing the decoy if the type were normal and vice versa.  Correspondingly, the optimal behavior of the human agent is to access the system if he were to find out the defensive system was likely normal and vice versa. Furthermore, we adopt the concept of repeated games with incomplete information \cite{zamir_game_theory_cambridge}\cite{aumann1995repeated_game} to study the temporal evolution of the strategies of both the defender and the human agent when both parties gather more and more information about each other. 

We formulate the decision problem for the human agent and derive that under mild assumptions, the anticipated behavior of the human attacker resembles quantum likelihood ratio test (QLRT). In the meantime, we formulate the defender's problem of designing optimal mixed type-dependent prospect states as a mixed-integer programming problem.   
We characterize how defense systems could make up the weakness of attackers as human agents by taking advantage of the bounded rationality. In particular,we adopt the concept of prospect probabilities \cite{yukalov2011decision_prospect_entanglement}, where the likelihood consists of two terms: utility factor and attraction factor \cite{sornette2020quantum_propsect_theory}. The utility factor represents probability that arises from the classical signals,  while the latter term does not arise from the actual data traffic but the perception of the data traffic due to quantum interference of different psychological state corresponding to the same classical signal.  The attraction factor could  lead the human agent towards (or away from) a certain choice. 

The main contribution of this work is two-fold. First, we develop a holistic framework to capture cyber-psychology techniques, specifying how a defender could implement cyber deception techniques by manipulating perceptions of signals to mislead an inside human attacker using her bounded rationality. Second, we illustrate and analyze human attacker's detection performance of decoys to show how strategically designed perceptions can influence human's decision-making and thus mitigate insider's attacks. Our analytical and numerical results provide hope for building next-generation deception mechanisms to combat human-related insider attacks in network security.

The rest of the paper is organized as follows. In section \ref{sec:formulation} we formulate the human-sensor system in cyber deception as a signaling game. In section \ref{sec:PBNE} we characterize the optimal behavior of the human agent and the cyber defensive system using the concept of equilibrium. 
In section \ref{sec: dynamic_scenario} we extend our signaling game formulation into a dynamic scenario, studying how the efficacy of the attacks evolve through time and how the defensive system and human agent can change their strategies as they both gather more intelligence from each other. 
In section \ref{sec:case_study} we provide a numerical case study on honeypots to illustrate our proposed framework. Finally we conclude in section \ref{sec:conclusion}. 

\subsection{Related work:}

   \paragraph{Game theory for cyber deception} In network security, game-theoretic frameworks have been widely applied for building proactive defense, particularly defensive cyber deception \cite{pawlick2021game_cyber_deception} to enhance the security and privacy for network users. 
Games with incomplete information \cite{zamir_game_theory_cambridge} provide a generic protocol to characterize the asymmetry of information induced by deception. Typical game frameworks that have been applied in network security include zero-sum games \cite{manshaei2013game_network_security},  Bayesian Stackelberg security games\cite{tambe2013improving_human_adversary}, partially observable stochastic games \cite{horak2017posg_security}. These complete or incomplete information game frameworks capture and interpret the attackers' and defenders' behaviors by computing appropriate concepts of equilibrium, depending on the information structure.   
In this work, we adopt the framework of a signaling game to characterize the relationship between the defensive deception system and human attacker, yet introduce quantum decision theory and the concept of quantum information to exploit the cognitive bias of the human attackers. 

\paragraph{Cyber deception through manipulating psychological states} There has been surging studies in formulating cyber deception techniques via psychological manipulation. 
%\textcolor{red}{(1)} 
Authors in \cite{ferguson2021decoy_psychology} have experimentally verified that it is not only the messages from the defensive system but also the perception of the messages, that will influence the human attacker's behavior. Authors in \cite{cranford2021cognitive_security_cyber_deception} propose an instance-based-learning (IBL) model of a human insider attacker using adaptive-control-of-thought-rational (ACT-R)\cite{anderson2014adaptive_control_rational_thought} as the cognitive architecture, which takes into consideration features in cognitive science: forgetting, power law of practice, partial matching process, etc. The IBL model also formulates how memory retrieval dynamics lead to cognitive biases. 
%\textcolor{red}{(2)(3)}
Our proposed framework adopts quantum decision theory, a generative parsimonious model to capture other biases in human's decision-making process \cite{busemeyer2012quantum_cognition} such as order effect and disjunction fallacy, etc. In addition, our proposed work focuses on how the defender system takes advantage of human attacker's biases by designing strategic generators in decoy systems that take advantage of human attacker's biases to combat. 

\paragraph{Insider threat/attack mitigation designs}
Several works have proposed guidelines for adopting honeypots in insider threats mitigation programs \cite{moore2015effective_insider_threat_mitigation}\cite{spitzner2003honeypots_insider_threat}.  
Game-theoretical frameworks have been adopted for formulating insider threats.
Authors in \cite{kantzavelou2010game_detection} use game-theoretical frameworks to develop detection mechanisms for insider-threats. Authors in \cite{joshi2020insider_threat_risk} adopt risk measures and extend risk analysis to cooperate with organizational culture. These works seek to contribute to an assessable understanding of the behaviors of adversarial insiders to develop more accurate best-responding strategies to combat insider threats but ignore the human aspects that lead to non-compliance of fully rational behaviors, such as the cognitive biases of various kinds in the human decision-making process. The authors in \cite{huang2021game_insider_threat_mechanism} have adopted the framework of mechanism design to address compliance and non-compliance for selfish and adversarial insiders. Our work adopts the concept of decoys to detect and monitor the behavior of insider attackers, deterring them from accessing normal sensors by taking advantage of cognitive biases of insiders to strategically design different perceptions of messages to influence their decision-making.

\subsection{Notations}
\label{sec:notation}
Throughout the paper, we use the following notations. We may introduce new notations in specific paragraphs later. 
We use the following notations:

$\mc{H}_C$: the (Hilbert) space overall signals;

$\kets\in \mc{H}_C$: a generic state associated with signal $s\in S$;

$\mc{H}_I$: the (Hilbert) space over all states of mind;

$\ket\in\mc{H}_I$: a generic state of mind;

 $\mc{H} = \mc{H}_C\otimes \mc{H}_I$: the Hilbert space (over the set of real numbers $\mR$) of all 'prospects'.
 
 $\mc{H}^*$: the dual space of $\mc{H}$;
 
$\mc{S}$: the subset of positive,Hermitian, bounded operators on $\mc{H}$ whose trace is $1$;  

$S$: the space of signals;

$\Delta(\cdot)$: the set of probability measures over the given space;

$\mathbf{1}$: the identity operator. Its domain and range depend on the context; 

$p_k\in\Delta(X)$: the common prior/common posterior belief of the true state after $k-1$ observations have been generated;  

$X=\{1,2\}$: the state space of the system. A generic state is denoted as $x$: $x=1,2$ represents the system is abnormal and normal, respectively; We denote $\text{dim}(X) = M = 2$. 

$a,b\in \mR^{S\times K}$: generic perception matrices from the defender based on its true type $0,1$.

In addition, for any operator $A\in B(\mc{H})$, we denote its conjugate transpose as $A^{\dagger}$.

\begin{figure}
    \centering
\includegraphics[width=0.9\linewidth]{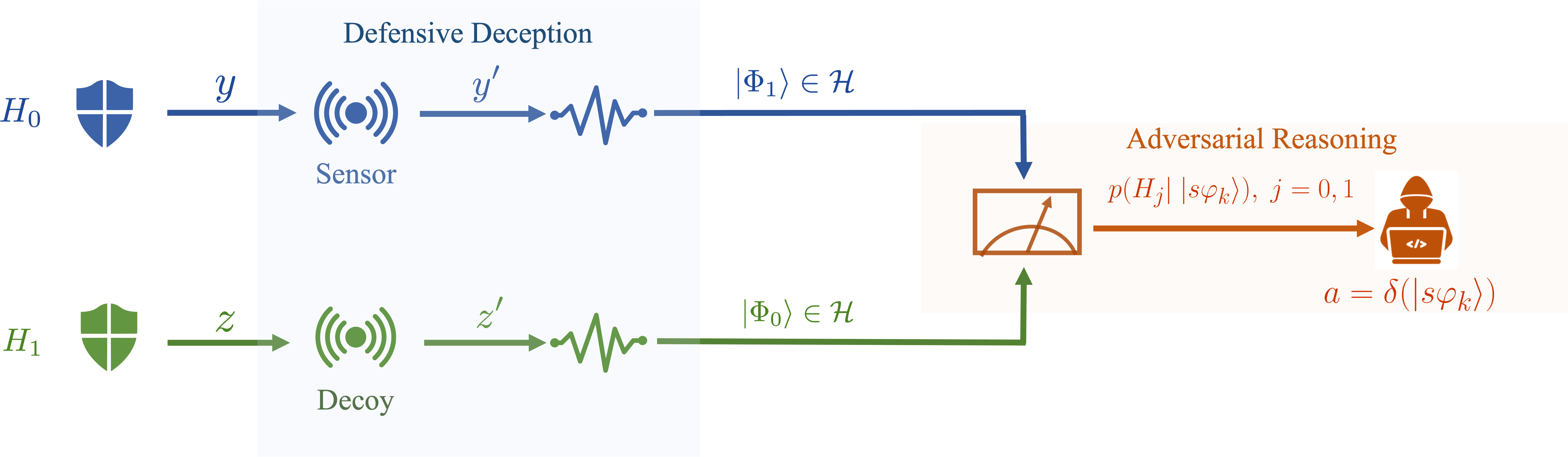}
    \caption{The human-sensor system in a network security scheme. The defensive cyber deception system consists of a normal sensor and a decoy, each of which is cascaded with a generator. The human agent is a receiver taking manipulated network traffic associated with perception messages. The normal sensor and the decoy produce manipulated traffic data obeying different distributions. The location of the decoy is a private type for the defensive system.  The receiver is also associated with three private types: user, prospect attacker, and quantum attacker. The goal of the defensive system lure the human attacker to access the decoy rather than the normal sensor. The goal of a human attacker aims at recognizing the decoy to avoid while making use of normal sensors.}
    \label{fig:human_sensor_system_b}
\end{figure}
%---------------------------------------
\section{The Formulation of Traversty Game}
\label{sec:formulation}

\subsection{Purpose of formulation}
%The advanced persistent threat (APT) has long been considered as a major problem in network security. In APT, attackers launch gathe

Insider threats \cite{collins2016insider_threat_guide} has long been an important issue in cyber-security because, in contrast to external attackers, insiders are aware of the structure of the defensive system, know about vulnerability, and more likely to launch strategic attacks to destroy the system effectively. Thus defensive deception techniques such as decoy systems have been implemented for the detection and mitigation of insider threats \cite{spitzner2003honeypots_insider_threat}. The goal of designing a defensive deception mechanism is to expose the vulnerabilities of their decoys to attract adversaries to access honeypots/decoys to trigger alerts and thus the defensive system can gather their information. 
To address the challenge, we need a novel configuration of decoy sensors and normal sensors to develop a next-generation defensive cyber deception system that exploits human biases to attract human attackers to focus on the decoy. Previous literature has pointed out \cite{manshaei2013game_network_security} that future defensive systems in network security must also consider human factors when predicting attackers' behaviors.  Human agents are subject to decision-making biases and exhibit other non-rational behavior. To this end, it is effective to introduce cyber-psychological techniques. Cyber-psychology \cite{mcalaney2018psychological_security} is the scientific field that integrates human behavior and decision-making into the cyber domain allowing us to understand, expect, and further influence attacker's behaviors. Experimental studies \cite{ferguson2021decoy_psychology} have shown that by providing information to the human factor, their mental model on the cyber defensive system was influenced and their decisions changed accordingly. To our best knowledge, there is still a lack of theoretical frameworks to interpret and address how those cyber-psychological methods can work effectively to mitigate attacks.    

\subsection{The game formulation}
In this section, we propose a game-theoretical framework on cyber defensive deception systems that mitigates insider threats by adopting cyber-psychological techniques. We show that cyber-psychological techniques can demonstrate a better deterrence of insider threats than their classical counterparts. 
We consider the protocols whose scheme is depicted in Figure \ref{fig:human_sensor_system_b}. In short, the defensive deception system (she) and the receiver (human agent, he) play a signaling game $\mc{G}$. The defensive system consists of two sensors: one normal, and one decoy, each of which is cascaded with a generator that generates psychological messages reflecting the perception of the manipulated data traffic. The defensive system connects one of the sensors to the human agent. The (human) receiver knows there is a decoy sensor but the placement of the decoy is unknown and serves as the defensive system's private type $x\in\{0,1\}$. He could only make decisions based on classical traffic data and the perception messages associated with the data.  Normal sensor accepts observations obeying distribution $g_0$, while decoy accepts observations obeying $g_1$. Denote $\hat{s}$ as random variables characterizing the random observations $s$ as corresponding realizations.   We say that 
human-agent faces a hypothesis testing problem:
    \begin{equation}
        H_1: \hat{s} \sim g_1(s),\;\;\;  H_0: \hat{s}\sim g_0(s).
    \end{equation}

The goal of the defense system is to strategically configure the normal sensor, the decoy sensor, as well as the generators to attract human attackers to access the decoy. The defensive system earns a reward when the adversarial agent accesses the decoy since such access provides information and triggers the alert of the defensive system \cite{ferguson2021decoy_psychology}. 

Meanwhile, the cyber deception system obtains observations $y$ from the network traffic (see figure \ref{fig:human_sensor_system_b}). Both normal and decoy sensors produce manipulated observations $s'$ and pass them into cascading generators. 
Based on manipulated observations $s'$ and the private type $x_0$, the connecting generator produces psychological signals characterized by a set of coefficients $\{a_{sk},b_{sk}\}_{s,k}$. The distorted observations together with psychological signals constitute the prospect state $|\Phi_1\rangle,|\Phi_0\rangle$ in the following way:
    \begin{equation}
        |\Phi_1(s)\rangle = \sum_{k}{a_{sk}\ketsk},\;|\Phi_0(s)\rangle = \sum_{k}{b_{sk}\ketsk}, 
        \label{prospect_state}
    \end{equation}
where we inherit Dirac's notations as introduced in section I.
Such a quantum state can be interpreted as messages announced to change the human agents' perceptions.  Such a generator produces stochastic messages manipulating the user's perception of messages. One example is an announcement like `the message comes from a real sensor'.  For instance, authors in \cite{ferguson2021decoy_psychology_deception} conducted experiments where the test takers are informed of the existence of a decoy system. The quantum-mechanical representation of messages can be referred to in \cite{busemeyer2012quantum_cognition}.        
     Upon observing $\Ket$, the human agent, randomly measures the prospect using one of the prospect basis $\ketsk$ and updates his prior belief on the defender's type. His mind is a composite prospect state \cite{yukalov2011decision_prospect_entanglement}. The human agent arrives at a decision $\alpha = \delta(\ketsk)\in [0,1]$, indicating the probability that the attacker thinks the hypothesis $H_1$ holds true.  

\paragraph{Sensor's (Defender's) problem}
The defender strategically designs manipulated classical observations from both sensors to mislead or promote human judgment. In the meantime, the defender creates type-dependent perceptions $a = (a_{sk})_{s,k}, b = (b_{sk})_{sk}\in\mR^{S\times K}$ regarding every signal $s\in S$ corresponding to the type $x=1$ and $x=0$ accordingly. The defender will earn a positive reward when the normal user accesses the normal sensor and a negative one when the attacker accesses the normal user or avoid accessing the decoy. 

\paragraph{Sensor's actions and strategies} Depending on the true type $x$ of deception system and well as the classical signal $s$, a generic defender's action involves a pair of prospect states $|\Phi_x(s)\rangle
\in\mc{H},\;\;$ We may also equivalently characterize the sensor's actions as two matrices $(a_{sk},b_{sk})$ since they can be written as in \eqref{prospect_state}.
\begin{comment}
    \begin{equation}
\begin{aligned}
    |\Phi_0(s)\rangle &= \sum_{k}{a_{sk}|s\varphi_k\rangle},\; \\
     |\Phi_1(s)\rangle &= \sum_{k}{b_{sk}|s\varphi_k\rangle}.
\end{aligned}
\end{equation}
\end{comment}

If we consider that the defensive system adopts mixed strategies, we could characterize the mixed strategies as density operators $\rho_1,\rho_0$ as follows:
\begin{align}
         {\rho}_1 & = \sum_{s,k,k'}{f_1(s){a_{sk}a_{sk'}|s\varphi_k\rangle\langle s\varphi_k'|}},  
        \label{eq:rho1_prospect_state}
        \\
        {\rho}_0 & = \sum_{s,k,k'}{f_0(s){b_{sk}b_{sk'}|s\varphi_k\rangle\langle s\varphi_k'|}},  
       \label{eq:rho0_prospect_state}
    \end{align}
where $f_1,f_0$ are probability density functions over $M$. Another way to characterize the sensor's actions is via the utility factor and attraction factor. 
Denote 
\begin{equation}
\begin{aligned}
    \langle \Phi_1 |P |\Phi_1\rangle &= \sum_{s,k,k'}{a_{sk} a_{sk'}\brask P \ketskp} \\
    & = \sum_{s,k}{a^2_{sk} \brask P\ketsk} + \sum_{s,k\neq k'}{a_{sk}a_{sk'}\brask P \ketskp}\\
    & \equiv u_1(s) + q_1(s) \\
     \langle \Phi_0 |P |\Phi_0\rangle &= \sum_{s,k,k'}{b_{sk} b_{sk'}\brask P \ketskp} \\
    & = \sum_{s,k}{b^2_{sk} \brask P\ketsk} + \sum_{s,k\neq k'}{b_{sk}b_{sk'}\brask P \ketskp}\\
    & \equiv u_0(s) + q_0(s), \\
\end{aligned}
\end{equation}
where $u$ is the utility factor and $q$ is the attraction factor of the prospect state upon the decision operator $P$. We here define 
\begin{align}
    u_1(s) &= \sum_{\ketsk \in \mc{R}}{{a^2_{sk}}}, \label{util_1} \\
     q_1(s) &= \sum_{\ketsk,\ketskp \in \mc{R}}{a_{sk}a_{sk'}},
     \label{attr_1}\\ 
      u_0(s) &= \sum_{\ketsk \in \mc{R}}{{b^2_{sk}}},
      \label{util_0}\\ 
       q_0(s) &= \sum_{\ketsk,\ketskp \in \mc{R}}{{b_{sk}b_{sk'}}} 
       \label{attr_0}
\end{align}

According to \cite{vincent2016calibration_qdt}, we adopt some calibration rules to construct the attraction factor $p$ so that it is related to the utility factor $u$ as 
\begin{equation}
    q_j(s) =\zeta \min\{u_j(s),1-u_j(s)\},\; j =0,1,
\end{equation}
where we can further denote $\zeta\in[-1,1],\;$ and notice again that $u_j(s)\in[0,1]$. 
Furthermore $u_j(s) = 1$ only when all $\ketsk \in \mc{R}$ for all $k\in K$ for the given $s$. The opposite goes with $u_j(s) = 0$. A similar goes with $u_0(s) = 1$. Here we use the parameter $\zeta$ to simplify the hyperbolic tangent function used in \cite{sornette2020quantum_propsect_theory}. We introduce the following assumption:
\begin{assume}
\label{assume:existence_a_b_coefficient}
    The coefficients $a, b\in \mR^{S\times K}$ as in \eqref{prospect_state} exist for every $u_1,u_0\in [0,1]^{S}$.
 \end{assume}
 
Assumption \ref{assume:existence_a_b_coefficient} guarantees that we can construct $a_{sk},b_{sk}$ using these equalities (of course, there may not only be one exact choice of $a,b$ reaching the same utility factor and attraction factor). Now it is equivalent to use the quantities defined in \eqref{util_1}\eqref{attr_1}\eqref{util_0}\eqref{attr_0} to characterize the defender system's behavior.

\paragraph{Defender's utility/loss function}
We now formulate the defender's loss function to minimize. The goal of the defender is to mitigate the human attacker's performance in identifying decoys so his objective function is the genuine detection rate introduced in \eqref{PD_PF_quantum}. 
 This is because every time the human attacker commits an error, or equivalently, access to the decoy sensor, an alert will be triggered and the defensive system can gather intelligent information from the human agent \cite{pawlick2021game_cyber_deception}.

The defensive deception system designs type-dependent distributions $\rho_1,\rho_0$ under the type $x$ by minimizing the following objective $J^x_S: X\times \Delta(M)\times \mc{B}(\mc{H}) \rightarrow \mR$ as 
\begin{equation}
\begin{aligned}
    J^x_S(x,\rho_x, P_1^*) &= \tr(\rho_1 P^*_1),
\end{aligned}
\end{equation}
where $P^*_1\in {B}(\mc{H})$ denotes the optimal prospect-projection-based decision policy for the human agent. Using the theory of potential games \cite{lloyd_1996potential_game}, we know the human is equivalent to minimize the following objective $J_D: \Delta({M})\times \mc{B}(\mc{H})\rightarrow \mR$:
\begin{equation}
\begin{aligned}
   \underset{\substack{a,b \\ f_1,f_0}}{\min}& \;J_D(a,b,P^*_1) = J^1_S(1,\rho_1,P^*_1) + J^0_S(0,\rho_0,P^*_1)  \\
   \Leftrightarrow \underset{\rho_1,\rho_0}{\min}&\; \sum_{\delta^*(\ketsk)>0}{\brask \rho_1 \ketsk} + 1,
   \label{problem:P1}
\end{aligned}
\end{equation}
where we compute the trace using the prospect basis $\{\ketsk\}_{s,k}$. If we adopt $u_1,u_0,f_1,f_0$ as the defender's type-dependent strategy, we can introduce the objective function $F: [0,1]\times [0,1]\times L^1(S)\times L^1(S) \rightarrow \mR$ as follows:
\begin{equation}
  \underset{\substack{f_1,f_0\\ u_1,u_0}}{\min }F(u_1,u_0,f_1,f_0) \Leftrightarrow   \underset{\substack{f_1,f_0\\ u_1,u_0}}{\min }\sum_{s\in\mc{R}_s}{f_1(s)u_1(s)}.
    \label{problem:P2}
\end{equation}
with $\mc{R}_s : =  \{s: \exists k,\; \ketsk\in\mc{R}\}$.
\begin{prop}
\label{prop:equiv_optimization_defender}
    Let $(a^*,b^*)$ be an optimal solution for the optimization problem \eqref{problem:P1}, Let $u^*_1,u^*_0: S\rightarrow [0,1], q^*_1,q^*_0: S\rightarrow [-1,1]$ be the optimal solution for the optimization problem  \eqref{problem:P2}. Then we can construct the relation in \eqref{util_1}\eqref{attr_1}\eqref{util_0}\eqref{attr_0}.
\end{prop}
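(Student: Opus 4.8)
The plan is to prove that the programs \eqref{problem:P1} and \eqref{problem:P2} carry the same objective up to an additive constant once the density operator $\rho_1$ is re-expressed through the utility factor, so that their minimizers correspond under the relations \eqref{util_1}--\eqref{attr_0}. The entire argument rests on evaluating the diagonal matrix elements $\brask\rho_1\ketsk$ appearing in \eqref{problem:P1} by means of the explicit form of $\rho_1$ in \eqref{eq:rho1_prospect_state}.

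First I would expand, under the orthonormality of the prospect basis $\{\ketsk\}_{s,k}$ (the same basis in which the trace in \eqref{problem:P1} is computed),
\[
\brask\rho_1\ketsk
= \sum_{s',l,l'} f_1(s')\,a_{s'l}a_{s'l'}\,\langle s\varphi_k|s'\varphi_l\rangle\,\langle s'\varphi_{l'}|s\varphi_k\rangle
= f_1(s)\,a_{sk}^2 ,
\]
since orthonormality forces $s'=s$ and $l=l'=k$. The key point is that the off-diagonal $k\neq k'$ interference terms, that is, the attraction factor $q_1$, are annihilated by this diagonal projection, leaving only the utility contribution $a_{sk}^2$.

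Next I would sum this identity over the human agent's acceptance region $\mc{R}=\{\ketsk:\delta^*(\ketsk)>0\}$ and regroup by signal,
\[
\sum_{\delta^*(\ketsk)>0}\brask\rho_1\ketsk
= \sum_{s\in\mc{R}_s} f_1(s)\!\!\sum_{k:\,\ketsk\in\mc{R}}\!\! a_{sk}^2
= \sum_{s\in\mc{R}_s} f_1(s)\,u_1(s),
\]
where the inner sum is exactly $u_1(s)$ of \eqref{util_1} and $\mc{R}_s$ is as defined after \eqref{problem:P2}. Hence the objective of \eqref{problem:P1} equals the objective of \eqref{problem:P2} plus the constant $1$ contributed by the $J^0_S$ term, which does not involve the $\rho_1$-block being optimized; the two problems therefore share the same minimizers in $(f_1,u_1)$.

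Finally, to upgrade this equality of objectives into a genuine correspondence of solutions, I would invoke Assumption \ref{assume:existence_a_b_coefficient}: from an optimal pair $(u^*_1,u^*_0)$ together with the calibrated $q^*_j=\zeta\min\{u^*_j,1-u^*_j\}$, the assumption produces coefficients $a^*,b^*\in\mR^{S\times K}$ realizing \eqref{util_1}--\eqref{attr_0}, which are then feasible and optimal for \eqref{problem:P1}; conversely, any optimal $(a^*,b^*)$ defines optimal $u^*_j,q^*_j$ directly through those same equations. I expect the main obstacle to lie not in the computation but in the bookkeeping on the feasible sets: one must verify that the normalization $\tr\rho_1=1$ (equivalently $\sum_k a_{sk}^2=1$ for each $s$) is compatible with $u_1(s)\in[0,1]$, and that the calibration rule together with Assumption \ref{assume:existence_a_b_coefficient} covers every admissible $(u_1,u_0)$, so that passing back to the $(a,b)$ parametrization loses no optimizer of \eqref{problem:P2}.
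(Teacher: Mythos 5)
Your proof is correct and follows essentially the same route as the paper's: both arguments reduce to the identity between the objectives of \eqref{problem:P1} and \eqref{problem:P2} under the substitution \eqref{util_1}--\eqref{attr_0}, combined with Assumption \ref{assume:existence_a_b_coefficient} to lift any admissible $(u_1,u_0)$ back to coefficients $(a,b)$. The only difference is presentational: you verify the key identity $\brask\rho_1\ketsk = f_1(s)\,a_{sk}^2$ explicitly by orthonormality (which the paper's appendix asserts without computation) and conclude directly, whereas the paper packages the same correspondence as a proof by contradiction.
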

The proof can be viewed in the appendix \ref{appd:equiv_optimization_defender}. 

\paragraph{The belief updates}
Upon receiving the prospect state $\Ket \in \mc{H}$, the human agent first updates the prior belief regarding the defender's type into posterior belief:
\begin{equation}
p(H_x|\;\ketsk) = \frac{p(H_x)\tr(P_{sk}\rho_x P^{\dagger}_{sk}) }{p(H_1)\tr(P_{sk}\rho_1 P^{\dagger}_{sk})  + p(H_0)\tr(P_{sk}\rho_0 P^{\dagger}_{sk}) },\;x=0,1,    
\label{posterior_belief}
\end{equation}
where $P_{sk}\in \mc{H}$ is the projection operator upon a specific the prospect state basis $\ketsk$: that is, $P_{sk} = \ketsk\brask$. 

\paragraph{Human's actions}
The human agent first can estimate the defender's strategies, characterized as mixed prospect states $\rho^*_1,\rho^*_0$ at equilibrium. Thus he can construct two density operators under each hypothesis as psychological prospects in \eqref{eq:rho1_prospect_state}\eqref{eq:rho0_prospect_state}.  The human's action $\alpha\in [0,1]$ characterizes the probability that the human agent thinks the traffic data come from a decoy (therefore not to access). The human agent arrives at a decision rule $\delta: \mc{H}\rightarrow [0,1],\; \alpha= \delta(\ketsk)$ upon receiving the prospect state $\ketsk\in \mc{H}$ from the deceptive defense system through a measurement operator $P\in B(\mc{H})$ as follows:
\begin{equation}
    \delta(\ketsk) = \brask P \ketsk. 
    \label{def: decision_measurement}
\end{equation}
Equivalently, the human agent's strategy space is the space of all projective operator-valued measurements (POVM). The human agent applies the concept of Neyman-Pearson hypothesis testing scenario \cite{neyman_pearson1933}: that is, a human agent aims at maximizing the probability of detection (accessing the normal user) while constraining the probability of false alarm (choosing to access while the target sensor is a decoy). Based on the $\ketsk$, the human attacker's empirical false alarm rate is $p(H_0|\;\ketsk)$ so we can express his strategy space $A_H$ as follows.
\begin{equation}
A_H = \{ \delta :\mc{H}\rightarrow [0,1]:\delta(\ketsk) p(H_0|\;\ketsk)<\beta\},
\label{A_H_human_attacker_strategy}
\end{equation}
where $\beta$ is the tolerance that the human agent could have regarding his false alarm rate. The posterior belief $p(H_j|\;\ketsk)$ is expressed in \eqref{posterior_belief}.
\paragraph{Human agent's type-dependent utility/loss function}
The human attacker wants to avoid decoys and access normal sensors. The human agent also suffers from cognitive biases characterized by quantum decision theory. Now the prior belief $p$ is constructed and updated in terms of the defense system's type $x$.   We now assume that the human agent arrives at a decision based on the posterior belief $p(H_1|\Phi)$: if it is too high, then the human agent will choose $0$ to avoid the cost of low.
Human's optimization problem can be expressed as 
\begin{equation}
    \begin{aligned}
        \underset{\delta  \in A_H}{\max} \;\delta(\ketsk) p(H_1|\;\ketsk)
        \label{A_H_human_attacker_optim_problem}
    \end{aligned}
\end{equation}

\subsection{Game elements}
We can now summarize our discussions in the previous section and propose our novel protocol for the game in the following definition.

\begin{Def}[`Traversty game'(TG)]
\label{Def:signaling_game}
 We define `game of travesty', a signaling game $$\mc{G} = \langle \mc{I}, X, A_S, A_{H}, F_S, J_H, p\rangle,$$
where $\mc{I} = \{\text{defender},\text{human attacker}\}$ represents the set of players; $x\in X=\{0,1\}$ be the defender's type (normal or decoy); $A_S = M\times H$ be the classical message space from the defender; $A_{H} \subset [0,1]$ represents the human agent's action space; $\mc{H}$ be the space of perceptual message from the generator; $F_S: [0,1]^2\times [L^1(S)]^2\times B(\mc{H})\rightarrow \mR $ be the defender's objective function; $J_H: M \times \mc{H}\times A_H\rightarrow \mR$ be the human agent's type-dependent objective function; $p\in\Delta(X)$ be the common prior belief of the private types of the defender and the human agent.
\end{Def}

\subsection{Relation to classical signaling games}
%\textcolor{red}{The analysis, both graphical and analytical, inherits from the \cite{snow2022quickest_detection_QDT} as well as from \cite{gibbons1992game_theory_economist}}.

The proposed traversty game can be considered as a generalization of the hypothesis testing game raised in \cite{hu2022_evasion_game_CDC} with two-sided incomplete information, heterogeneous receivers, and adoption of quantum probabilistic model. The framework in \cite{hu2022_evasion_game_CDC} consolidates hypothesis testing formulation into signaling game framework \cite{crawford1982strategic_information_transmission} where one party, upon knowing the true hypothesis, can strategically manipulate observations to undermine the detection performance.  If the defender cannot design perceptions of classical messages using generators, then the travesty game framework reduces to hypothesis testing game framework in \cite{hu2022_evasion_game_CDC}. The adoption of the prospect state enhances the cyber deception design by taking advantage of human's bounded rationality to provide the defender extra degrees of freedom. Such degrees of freedom characterize how the human agents' `perceptions' of classical messages can contribute to their decision-making process. 

There are several scenarios where the defender's strategies are reduced to classical counterparts. Denote $a,b$ as the matrices of coefficients of the defender in \eqref{prospect_state}. Then when $a = R_1I, b = R_0 I$, where $R_0, R_1$ are some column permutation matrices and $I$ is an identity matrix, then the `quantum effect' vanishes as the defender associates a unique fundamental `mindset'` regarding every classical signal $s$. 

%\paragraph{Reducing to risk-preference free} When $\eps= 1$, the human agents become risk-neutral and the only the quantum effects capture their bounded rationality in decision-making. 

% \paragraph{The role of prospect state}
% We see that compared to the sensor's strategies as discussed in \cite{hu2022game_NP}. In this paper we introduce the concept of prospect states, which extend  

\subsection{Equilibrium Analysis}
\label{sec:PBNE}

We aim at computing the perfect Bayesian Nash equilibrium \cite{fudenberg1998game} (PBNE) to characterize the behaviors of the defense system and the human agents. We can define PBNE of the game $\mc{G}$ as follows:
\begin{Def}[Perfect Bayesian Nash Equilibrium for the game $\mc{G}$]
   We define the perfect Bayesian Nash equilibrium (PBNE) of the signaling game $\mc{G}$ as the following tuple $(u^*_1,u^*_0,\delta^*,p)$
   meeting the following requirements:
   \begin{enumerate}
       \item (Human agent's sequential rationality)
    \begin{equation}
        \delta^*({\ketsk}) \in\arg\underset{\delta\in A_H} {\min} J_{H}(\ketsk,u^*_1,u^*_0,\delta),
\label{def:human_attacker_seq}    
\end{equation}
\item (Defensive system's sequential rationality) 
\begin{equation}
    (u^*_1,u^*_0)\in \arg\underset{u_0,u_1}{\min}\;  F(u_1,u_0,f_1,f_0,\delta^*),\;x\in\{0,1\},     \label{def:pbne_defender_seq_rationality}
    \end{equation}
       \item (Belief consistency) 
       The belief is updated according to Bayes' rule:
       \begin{equation}
p(H_j|\; \ketsk) = \frac{p(H_j,|\;\ketsk)\brask\rho_j\ketsk}{\sum_{\substack{j'=0,1}}{p(H_{j'})\brask\rho^*_{j'}\ketsk }},\;\;j=0,1.
\end{equation}
   \end{enumerate}
\end{Def}

% Introducing the following assumption
% \begin{assume}
% \label{assume:coefficients}
%     We assume the coefficients in minimization of the objective function of the human agent as follows: $0\leq c_{11}= c_{00}<c_{01},c_{10}$.
% \end{assume}
We can derive the human agent's optimal decision rule as follows.
\begin{prop}
\label{prop:human_agent_optimal}
Consider $\mc{G}$ to be the travesty game in definition \eqref{Def:signaling_game}. Let $(a^*_{sk},b^*_{sk})_{s\in S, k\in K}$ be defender's coefficients of optimal type-dependent prospect states satisfying the utility factors $u^*_1,u^*_0$ in \eqref{util_1}\eqref{util_0}, which are characterized as the defender's strategies at equilibrium \eqref{def:pbne_defender_seq_rationality}. Then the human attacker's optimal decision rule $\delta^*: \mc{H}\rightarrow [0,1]$ at equilibrium defined in \eqref{def:human_attacker_seq} receiving the prospect state $\ketsk$ reduced from superposition state $\Ket$ can be derived as
\begin{equation}
  \delta^*(\ketsk) =  \begin{cases}
        1 &  \frac{f_1(s) (a^*_{sk})^2}{f_0(s) (b^*_{sk})^2 }> (\frac{1}{\beta}-1)\frac{p(H_0)}{p(H_1)}, \\
        %\frac{\beta}{p(H_0|\Phi)} &  \mbox{otherwise}
        0 &  \mbox{otherwise}
    \end{cases}
    \label{def:diagonal}
\end{equation}
\end{prop}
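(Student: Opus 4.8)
The plan is to reduce the human agent's constrained detection problem to a per-state threshold comparison, by first making the posterior beliefs explicit and then solving the resulting scalar optimization. First I would evaluate the two posterior beliefs $p(H_1|\ketsk)$ and $p(H_0|\ketsk)$ appearing in \eqref{posterior_belief}. Since $P_{sk}=\ketsk\brask$ is a rank-one orthogonal projection (hence Hermitian, with $\tr(P_{sk})=\langle s\varphi_k|s\varphi_k\rangle=1$), one has $\tr(P_{sk}\rho_x P^{\dagger}_{sk})=(\brask\rho_x\ketsk)\tr(P_{sk})=\brask\rho_x\ketsk$. Substituting the density operators \eqref{eq:rho1_prospect_state} and \eqref{eq:rho0_prospect_state} and using orthonormality of the prospect basis $\{\ketsk\}$, all cross terms with $k\neq k'$ vanish, leaving $\brask\rho_1\ketsk=f_1(s)(a^*_{sk})^2$ and $\brask\rho_0\ketsk=f_0(s)(b^*_{sk})^2$. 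This renders the posteriors explicit ratios in the defender's equilibrium coefficients and the prior $p$.

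Next I would cast the human's problem \eqref{A_H_human_attacker_optim_problem} as a Neyman--Pearson test on the single received state $\ketsk$: maximize the detection contribution $\delta(\ketsk)\,p(H_1|\ketsk)$ subject to the false-alarm tolerance $\delta(\ketsk)\,p(H_0|\ketsk)<\beta$ defining $A_H$ in \eqref{A_H_human_attacker_strategy}. Because the decision is realized through a projective measurement diagonal in the prospect basis (cf. \eqref{def: decision_measurement}), the value $\delta(\ketsk)=\brask P\ketsk$ is an eigenvalue of a projector and therefore equals $0$ or $1$. Restricted to $\delta(\ketsk)\in\{0,1\}$, the objective is $0$ at $\delta(\ketsk)=0$ and equals the nonnegative quantity $p(H_1|\ketsk)$ at $\delta(\ketsk)=1$, the latter being feasible exactly when $p(H_0|\ketsk)<\beta$. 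Hence the optimal rule sets $\delta^*(\ketsk)=1$ precisely on the event $p(H_0|\ketsk)<\beta$ and $\delta^*(\ketsk)=0$ otherwise.

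Finally I would rearrange the feasibility condition into the claimed threshold. Abbreviating $A=p(H_1)f_1(s)(a^*_{sk})^2$ and $B=p(H_0)f_0(s)(b^*_{sk})^2$, the first paragraph gives $p(H_0|\ketsk)=B/(A+B)$, so $p(H_0|\ketsk)<\beta$ is equivalent to $B(1-\beta)<\beta A$, i.e. $A/B>\tfrac1\beta-1$; dividing out the priors yields $\tfrac{f_1(s)(a^*_{sk})^2}{f_0(s)(b^*_{sk})^2}>\bigl(\tfrac1\beta-1\bigr)\tfrac{p(H_0)}{p(H_1)}$, which is exactly \eqref{def:diagonal}.

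I expect the main obstacle to be reconciling the continuous action range $\alpha\in[0,1]$ with the binary threshold output. On the relaxed interval the monotone objective pushes $\alpha$ toward the open boundary $\beta/p(H_0|\ketsk)$ whenever $p(H_0|\ketsk)\geq\beta$, so the supremum is not attained and no clean $0/1$ rule emerges. The projective-measurement restriction is what collapses this to the stated rule, and I would need to argue carefully that optimizing over diagonal projectors is the intended reading of the strategy space $A_H$ and is consistent with the POVM formulation of \eqref{def: decision_measurement}, as well as dispose of the degenerate boundary case $a^*_{sk}=0$, where $p(H_1|\ketsk)=0$ makes the two decisions indifferent and the threshold inequality fails, so that the convention $\delta^*(\ketsk)=0$ remains consistent.
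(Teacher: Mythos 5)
Your proof follows essentially the same route as the paper's: a case split on whether $\delta(\ketsk)=1$ is feasible under the false-alarm constraint $p(H_0|\,\ketsk)<\beta$, followed by substituting the posterior \eqref{posterior_belief} with the diagonal entries $\brask\rho_x\ketsk=f_x(s)\cdot(\text{coefficient})^2$ to turn that feasibility condition into the likelihood-ratio threshold \eqref{def:diagonal}. Your added care about the continuous range $\alpha\in[0,1]$ (where the supremum is not attained when $p(H_0|\,\ketsk)\geq\beta$) and about the degenerate case $a^*_{sk}=0$ goes beyond the paper, which simply treats the decision as binary, but the core argument is identical.
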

\begin{proof}
    See the appendix.
\end{proof}

The optimal decision rule $\delta^*$ decomposes the space of prospect states $\mc{H}$ into region of rejection $\mc{R}$ and region of acceptance $\mc{R}_0$ as follows:
\begin{equation}
    \mc{R} = \text{span}\{\ketsk\}_{\delta(\ketsk)=1},\;\mc{R}^{\perp} = \text{span}\{\ketsk\}_{\delta(\ketsk)=0}.
    \label{def:rejection_region}
\end{equation}

Referring to the definition of the decision rule \eqref{def: decision_measurement} we notice that the diagonal elements of $P_1$ have been specified. We now assume the off-diagonal elements as 
\begin{equation}
    \braskp  P_1 \ketsk = \begin{cases}
        \frac{1}{N_s} & 
 \ketsk,\ketskp\in\mc{R}, \\
        0  & \mbox{otherwise},
    \end{cases}
    \label{projection_off_diagonal}
\end{equation}
where $N_s$ is the number of vectors among $\{\ketsk\}$ that lie in $\mc{R}$.

\begin{prop}
    The operator $P_1$ defined in \eqref{projection_off_diagonal} and \eqref{def:diagonal} is a projection operator.
\end{prop}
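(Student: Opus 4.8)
The plan is to establish the two defining properties of an (orthogonal) projection operator, namely self-adjointness $P_1 = P_1^{\dagger}$ and idempotency $P_1^2 = P_1$, working entirely in the prospect basis $\{\ketsk\}_{s,k}$, which I take to be orthonormal, $\langle s\varphi_k|s'\varphi_{k'}\rangle = \delta_{ss'}\delta_{kk'}$. The first structural observation is that $P_1$ is block-diagonal with respect to the signal index $s$: by \eqref{projection_off_diagonal} every matrix element $\braskp P_1 \ketsk$ coupling distinct signals vanishes, so it suffices to analyze each signal block in isolation and then assemble the direct sum over $s$.

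Self-adjointness is the easy step. All prescribed matrix elements in \eqref{def:diagonal} and \eqref{projection_off_diagonal} are real, and the condition $\ketsk,\ketskp \in \mc{R}$ that selects the value $\tfrac{1}{N_s}$ is symmetric under exchanging $k$ and $k'$. Hence the matrix of $P_1$ in the prospect basis is real and symmetric, giving $\braskp P_1 \ketsk = \overline{\brask P_1 \ketskp}$ and therefore $P_1 = P_1^{\dagger}$.

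The substance of the argument is idempotency, and the cleanest route is to recognize the nonzero block attached to each signal $s$ as a rank-one projector. Within the span of the rejection-region states $\{\ketsk : \ketsk \in \mc{R}\}$ sharing the signal $s$ (of which there are $N_s$), I would introduce the unit vector $|\psi_s\rangle = \tfrac{1}{\sqrt{N_s}}\sum_{k:\,\ketsk\in\mc{R}} \ketsk$ and verify that the prescribed entries reproduce exactly $|\psi_s\rangle\langle\psi_s|$, whose matrix elements are uniformly $\tfrac{1}{N_s}$ on that block. Writing $P_1 = \sum_s |\psi_s\rangle\langle\psi_s|$, orthonormality of the prospect basis makes the $|\psi_s\rangle$ mutually orthonormal, so $P_1^2 = \sum_{s,s'} |\psi_s\rangle\langle\psi_s|\psi_{s'}\rangle\langle\psi_{s'}| = \sum_s |\psi_s\rangle\langle\psi_s| = P_1$. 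Equivalently, at the level of the all-ones block matrix $J$ one invokes the identity $J^2 = N_s J$, which is precisely where the normalization $\tfrac{1}{N_s}$ in \eqref{projection_off_diagonal} earns its keep; the acceptance directions spanning $\mc{R}^{\perp}$ contribute nothing and need only be checked to annihilate under $P_1$.

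I expect the main obstacle to be a consistency/bookkeeping issue rather than a deep one: one must reconcile the diagonal entries fixed through the decision rule \eqref{def:diagonal} with the off-diagonal entries assigned in \eqref{projection_off_diagonal}, so that each signal block is genuinely the uniform-superposition projector $\tfrac{1}{N_s}J$ (all entries, diagonal included, equal to $\tfrac{1}{N_s}$) rather than $\tfrac{1}{N_s}J + \bigl(1-\tfrac{1}{N_s}\bigr)\mathbf{1}$. Only the former is idempotent, the latter failing $P_1^2 = P_1$ unless $N_s = 1$. I would therefore state explicitly at the outset that the rejection block is filled uniformly by $\tfrac{1}{N_s}$, and then the two prescriptions glue into a bona fide orthogonal projection onto $\mathrm{span}\{|\psi_s\rangle\}_s$, completing the proof.
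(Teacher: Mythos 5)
Your proof is correct and follows essentially the same route as the paper's: both verify symmetry and idempotency directly in the prospect basis, the paper by computing $P_1^2\ketsk = P_1\ketsk$ entrywise and you by recognizing each signal block as the rank-one projector onto the uniform superposition $|\psi_s\rangle$ --- the same computation ($J^2 = N_s J$) in cleaner packaging, with positivity then automatic from $\langle x|P_1 x\rangle = \|P_1 x\|^2$. The consistency caveat you raise is real and is in fact the sharpest point here: taken literally, \eqref{def: decision_measurement} together with \eqref{def:diagonal} forces the diagonal entries on the rejection block to equal $1$, yielding $\tfrac{1}{N_s}J + \bigl(1-\tfrac{1}{N_s}\bigr)\mathbf{1}$, which fails idempotency for $N_s>1$; the paper's own proof silently adopts the all-$\tfrac{1}{N_s}$ reading of \eqref{projection_off_diagonal} (its step $P_1\ketsk = \sum_{\ketsk\in\mc{R}}\tfrac{1}{N_s}\ketsk$ holds only under that convention), so your explicit resolution at the outset is the more careful treatment.
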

\begin{proof}
   It is clear that $P_1\geq 0$ from proposition \ref{prop:human_agent_optimal}. From \eqref{projection_off_diagonal} we know $P_1$ is symmetric. In addition $P^2_1\ketsk = P_1(\sum_{\ketsk\in \mc{R} }{\frac{1}{N_s}\ketsk}) = P_1\ketsk$ so $P^2_1=P_1$. Thus $P_1$ is a projection operator \cite{reed1972methods_vol1}. 
\end{proof}

\begin{assume}[No change of classical message]
\label{assume:no_change_classical_message}
We assume that the defensive deception system does not change the classical message. That is, $g_1 = f_1,\;g_0 = f_0$.    
\end{assume}

Equipped with the human agent's optimal decision rule $\delta^*$ in \eqref{prop:human_agent_optimal}, we can simplify \eqref{def:pbne_defender_seq_rationality} and derive the following.

\begin{prop}
    \label{prop:generator_equil_strategy}
Let assumption \ref{assume:no_change_classical_message} hold. Let $\mc{G}$ be the signaling game in Definition \ref{Def:signaling_game}. Let $\delta^*$ be the human attacker's optimal decision rule defined in \eqref{def:human_attacker_seq} upon receiving prospect states with coefficients $a^*,b^*$ defined in \eqref{prospect_state}.  Denoting $\tau_s = \frac{p(H_1)f_0(s)}{p(H_0)f_1(s)}(\frac{1}{\beta}-1)$, we thus derive the defender's strategies $u^*_1(s),u^*_0(s)$ at equilibrium defined in \eqref{def:pbne_defender_seq_rationality} as by the following cases for every $s\in S$:
\begin{enumerate}
   \item When $\tau_s>1$, we pick $u^*_1(s) = 0$ and thus $u^*_0(s) = 0$; 
    \item When $0<\tau_s<1$, we pick region of acceptance until 
\begin{equation}
    1 - u_1(s) = \tau_s.
\end{equation}
so $1 - u^*_0(s) = 1$ or equivalently $u^*_0(s)=0$. Then $u^*_1(s) = 1-\tau_s$.
\end{enumerate}
The corresponding region of classical rejection can be written as $\mc{R}_s = \{s:\;0<\tau_s<1\}$.
\end{prop}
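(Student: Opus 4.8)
The plan is to read the defender's program \eqref{def:pbne_defender_seq_rationality} as a Stackelberg best response to the human agent's equilibrium rule from Proposition \ref{prop:human_agent_optimal}, then reduce it to a signal-by-signal allocation problem. Rewriting the likelihood-ratio test \eqref{def:diagonal}, a basis prospect $\ketsk$ lies in the rejection region $\mc{R}$ exactly when $(a^*_{sk})^2/(b^*_{sk})^2 > \tau_s$, so $\tau_s$ is precisely the per-signal threshold on the coefficient ratio. Under Assumption \ref{assume:no_change_classical_message} the classical densities $f_1,f_0$ are frozen, hence the defender's objective $\sum_s f_1(s)u_1(s)$ in \eqref{problem:P2} decouples across signals; since $f_1(s)\geq 0$, it suffices to minimize each $u_1(s)=\sum_{\ketsk\in\mc{R}}(a^*_{sk})^2$ separately.

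Next I would set up the per-signal subproblem: minimize the type-$1$ mass $u_1(s)$ that the human assigns to $\mc{R}$, over normalized coefficient vectors with $\sum_k(a_{sk})^2=\sum_k(b_{sk})^2=1$, keeping in mind that membership of $\ketsk$ in $\mc{R}$ is endogenous through the ratio test above. Equivalently, I would maximize the $a$-mass parked in the acceptance region $\mc{R}^{\perp}$, where $(a_{sk})^2\leq \tau_s (b_{sk})^2$. The crux is a budget argument: each unit of $a$-mass placed in $\mc{R}^{\perp}$ consumes at least $1/\tau_s$ units of the unit $b$-budget, so $\sum_{\ketsk\in\mc{R}^{\perp}}(a_{sk})^2 \leq \tau_s\sum_{\ketsk\in\mc{R}^{\perp}}(b_{sk})^2 \leq \tau_s$.

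From this bound the two cases fall out. When $\tau_s>1$ the budget is slack: taking $b_{sk}=a_{sk}$ for all $k$ makes every ratio equal to $1<\tau_s$, so $\mc{R}$ is empty for this $s$ and $u^*_1(s)=u^*_0(s)=0$. When $0<\tau_s<1$ the budget binds at $\tau_s$, so the maximal acceptance $a$-mass is $\tau_s$, giving $u^*_1(s)=1-\tau_s$; this is attained by concentrating the whole $b$-mass on the accepted prospects and setting $b_{sk}=0$ on the rejected ones, which simultaneously forces $u^*_0(s)=\sum_{\ketsk\in\mc{R}}(b_{sk})^2=0$, i.e. $1-u^*_0(s)=1$. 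Gathering the signals with nonempty rejection region then yields $\mc{R}_s=\{s:0<\tau_s<1\}$, and Assumption \ref{assume:existence_a_b_coefficient} guarantees that these optimal utility factors are realized by genuine coefficients $a^*,b^*$.

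The main obstacle I expect is the endogeneity of $\mc{R}$: it is not a fixed constraint set but is co-determined by the defender's own choice of $(a,b)$ through the human's best response, so the argument is really a fixed-point verification rather than a plain optimization. I would therefore check consistency at the candidate optimum, confirming that the prospects pushed into $\mc{R}^{\perp}$ indeed satisfy $(a^*_{sk})^2\leq \tau_s(b^*_{sk})^2$ while those left in $\mc{R}$ satisfy the strict reverse inequality, so that the constructed $(a^*,b^*)$ reproduces exactly the region it was optimized against. Some care is also needed at the boundary $\tau_s=1$ and with the strict-versus-weak inequalities in the test, and one should argue that the bang-bang allocation (all $b$-mass on accepted states) is genuinely optimal and not merely feasible, which holds because the budget constraint is the only coupling between the $a$- and $b$-vectors.
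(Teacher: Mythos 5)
Your proposal is correct and follows essentially the same route as the paper's proof: separability of the objective across signals, the per-signal budget inequality $1-u_1(s)=\sum_{\ketsk\in\mc{R}^{\perp}}(a^*_{sk})^2\leq \tau_s\sum_{\ketsk\in\mc{R}^{\perp}}(b^*_{sk})^2\leq\tau_s$, and the same two-case analysis on $\tau_s$. The only difference is that you make explicit the attainability construction and the consistency check that the candidate $(a^*,b^*)$ reproduces the rejection region it was optimized against, steps the paper leaves implicit.
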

\begin{proof}
    The proof is provided in the appendix \ref{appd:sol_signaling_lam_0}. 
\end{proof}

After obtaining $u^*_1(s), u^*_0(s),\;s\in S$, we can reconstruct the optimal prospect states $a^*,b^*$ by solving \eqref{attr_0}\eqref{attr_1}\eqref{util_0}\eqref{util_1}.
To measure the efficacy of cyber deception systems in counteracting the attacks, we can define the genuine detection rate and false alarm rate 
\begin{equation}
    P_D(\tau) = \tr(\rho^*_1  P^*_1(\tau)),\; P_F(\tau) = \tr(\rho^*_0 P^*_1(\tau)).
    \label{PD_PF_quantum}
\end{equation}
As a comparison, we denote the vanilla detection rate and false alarm rate of the insider attack (IA) as 
\begin{equation}
    \bar{P}_D(\tau) = \sum_{s:\delta^*(s;\tau)=1}{f_1(s)},\; \bar{P}_F(\tau) = \sum_{s:\delta^*(s;\tau)=1}{f_0(s)}.
        \label{PD_PF_classical}
\end{equation}

We now show that the role of the generator is to create more room for the attacker to deceive the human attacker to lower their probability of identifying the decoy system.

\textbf{Remark:} We also find out when $\tau\rightarrow \infty$ (the whole region of $S$ is of classical acceptance region ) or $\tau\rightarrow 0$(the whole region of $S$ is of classical rejection region), the detection rate $P_D(\tau)$ is close to $\bar{P}_D(\tau)$. That is, the quantum effect in decision-making vanishes when the prospect probability is close to $1$ or $0$, which is consistent with the discussion in Vincent's work \cite{vincent2016calibration_qdt} in quantum prospect theory.

\subsection{Some metrics evaluating the quantum advantage/disadvantage}
\paragraph{Quantum advantage and quantum disadvantage}
We can define the following metrics as quantum advantages/disadvantages as follows. 

\begin{Def}[Quantum advantage/disadvantage]
   We define the quantum advantage of 
   \begin{equation}
       QA(\tau) = P_D(\tau)/\bar{P}_D(\tau), 
   \end{equation}
   where $P_D:\mR\rightarrow \mR$ is the detection rate for the human attacker under manipulation defined in \eqref{PD_PF_quantum} and  $\bar{P}_D:\mR\rightarrow \mR$ be the counterpart for a non-adversarial human attacker without bounded rationality defined in \eqref{PD_PF_classical}.
   \label{Def:Quantum_advantage1}
\end{Def}
The quantum advantage (QA) is a crucial evaluation for the effect of introducing the generator in the defender system. It depends on the threshold $\tau$ as well as the calibration parameter $\zeta$. It measures the impact of manipulation of mind states upon human attacker's performance on detecting decoys.  We say that the human attacker gains a quantum advantage in identifying decoys if $QA(\tau)>1$ and suffers a quantum disadvantage if $QA(\tau)<1$. 

\begin{prop}
\label{prop:PD_PD_bar}
Let $QA$ be the quantum advantage in definition \ref{Def:Quantum_advantage1}.    Then for all choices of $\tau>0$ and for all choices of $f_1,f_0\in L^1(S)$, we arrive $P_D(\tau)\leq \bar{P}_D(\tau)$.
 \begin{equation}
        0\leq QA(\tau)\leq 1+\zeta.
    \end{equation}
\end{prop}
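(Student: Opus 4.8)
The plan is to reduce both $P_D(\tau)$ and $\bar P_D(\tau)$ to sums over the classical rejection signal set $\mc{R}_s$ and then compare them term by term. First I would substitute the equilibrium density operator \eqref{eq:rho1_prospect_state} and the projector $P_1$ fixed by \eqref{def:diagonal}--\eqref{projection_off_diagonal} into $P_D(\tau)=\tr(\rho^*_1 P^*_1)$. Expanding the trace in the prospect basis $\{\ketsk\}$, the diagonal entries $\brask P_1 \ketsk = \delta^*(\ketsk)$ collect exactly the utility factor $u_1(s)$ of \eqref{util_1}, while the off-diagonal entries, weighted by $1/N_s$ from \eqref{projection_off_diagonal}, collect the attraction factor $q_1(s)$ of \eqref{attr_1}. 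This yields the compact form
\begin{equation}
P_D(\tau)=\sum_{s\in\mc{R}_s} f_1(s)\big(u_1(s)+q_1(s)\big),
\end{equation}
and by definition \eqref{PD_PF_classical}, together with Proposition \ref{prop:generator_equil_strategy} (which identifies $\mc{R}_s=\{s:0<\tau_s<1\}$ and $u^*_1(s)=1-\tau_s\in(0,1)$ on it), the classical rate is $\bar P_D(\tau)=\sum_{s\in\mc{R}_s} f_1(s)$, i.e. the same sum with every utility factor replaced by $1$.

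With the two expressions aligned on the same index set, all three inequalities follow from term-wise bounds on $u_1(s)+q_1(s)$ supplied by the calibration rule $q_j(s)=\zeta\min\{u_j(s),1-u_j(s)\}$. For the lower bound, positivity of $\rho^*_1$ and of the projector $P^*_1$ gives $P_D(\tau)=\tr(\rho^*_1 P^*_1)\ge 0$, and since $\bar P_D(\tau)\ge 0$ we obtain $QA(\tau)\ge 0$. For $P_D\le\bar P_D$ (equivalently $QA\le 1$) I would use $\min\{u_1,1-u_1\}\le 1-u_1$ and $\zeta\le 1$ to get $u_1(s)+q_1(s)\le u_1(s)+(1-u_1(s))=1$, so each quantum term is dominated by the classical term $f_1(s)$. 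For the refined bound $QA\le 1+\zeta$ I would instead use $\min\{u_1,1-u_1\}\le u_1$, giving $u_1(s)+q_1(s)\le(1+\zeta)u_1(s)$; summing against $f_1$ and using $u_1(s)\le 1$ yields $P_D(\tau)\le(1+\zeta)\sum_{s\in\mc{R}_s}f_1(s)u_1(s)\le(1+\zeta)\bar P_D(\tau)$.

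The hard part will be the first step: correctly evaluating $\tr(\rho^*_1 P^*_1)$ so that the interference (off-diagonal) contribution collapses to precisely the attraction factor, since the normalization $1/N_s$ in \eqref{projection_off_diagonal} must be reconciled with the definition \eqref{attr_1}, and one must verify that $P_D$ and $\bar P_D$ range over the identical signal set $\mc{R}_s$ before comparing. A secondary subtlety is the sign of $\zeta$ inside the calibration rule: the estimate $q_1\le\zeta u_1$ driving $QA\le 1+\zeta$ relies on the attraction factor being enhancing (where the $\min$ is controlled by $u_1$), whereas $P_D\le\bar P_D$ holds for every $\zeta\le 1$. I would therefore present $P_D\le\bar P_D$ as the universal statement and $QA\le 1+\zeta$ as the matching upper envelope, so that the two bounds together confine $QA(\tau)$ to the interval $[0,1+\zeta]$.
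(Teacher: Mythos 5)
Your proposal is correct and follows essentially the same route as the paper: write $P_D(\tau)=\sum_{s\in\mc{R}_s}f_1(s)\bigl(u_1(s)+q_1(s)\bigr)$, identify $\bar P_D(\tau)=\sum_{s\in\mc{R}_s}f_1(s)$ over the same classical rejection set, and bound termwise via the calibration rule $q_1=\zeta\min\{u_1,1-u_1\}$. If anything, your version is the more careful one — the paper's own computation only establishes the $(1+\zeta)$ envelope (and silently assumes $q_1=\zeta u_1$, i.e.\ $u_1\leq 1/2$), whereas you separately derive $P_D\leq\bar P_D$ from $\min\{u_1,1-u_1\}\leq 1-u_1$ and correctly flag the sign of $\zeta$ as the condition under which $q_1\leq\zeta u_1$ holds.
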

\begin{proof}
    See the appendix \ref{appd:PD_PD_bar}
\end{proof}

\section{Dynamic scenario}
\label{sec: dynamic_scenario}
We now extend $\mc{G}$ into a multi-stage game $\mc{G}^N$ with finite horizon $N$. For each stage $k\in [N]$, the sensor system generates manipulated observations and the human agent launches access to one of the sensors. After both the defense system and the human agent take actions, cost/reward is incurred.  The system belief on the defender's true type is updated.  We assume that the  defender never changes his type during the game. Therefore, the defender system exposes more about his type (normal sensor or decoy) as she produces more messages. We introduce the concept of the history of the actions taken by both the sensor and the human agent as follows.

%We define the history of messages from defense systems up to stage $n$ as $m^n$. Furthermore, we denote $m^{(s)}$ as the message from the defense system at stage $s$. Similar nomincalture goes with the human agent's action history $a^n_{\tha}$ up to stage $n$ and $a^{(s)}_{\tha}$ of type $\tha$ at stage $s$.

\begin{Def}[History of action profiles]
We define the history of action profiles up to stage $N$, denoted as $h^{(j)}\in \mc{H}^{\otimes j}\times [0,1]^{\otimes j},\;j\in[N_1]$, as follows:
\begin{equation}
    h^{(j)} = (\psi^{(j)}, \alpha^{(j)}),
\end{equation}
where $\ketpsi^{(j)} = (\ketpsi_1,\dots,\ketpsi_{j})\in \mc{H}^{\otimes j}$ as a generic history of base vectors from the prospect state up to stage $j$ and  $\delta^{(j)}(\psi^{(j)}) \in [0,1]^{\otimes j} = A^{\otimes j}_H$ refers to history of the detector's actions up to stage $j$. 
\end{Def}

In general, at the beginning of every stage $j$, the defender's mixed strategy and the human agent's optimal decision rule should depend on the history $h^{(j-1)}$. Here we denote $\psi\in \{\ketsk\}_{s,k}$ as a generic base vector in the prospect state basis. We assume in the following that 
\begin{assume}[Action-independent assumption]
At every stage $j\in[N_1]$, the human attacker's optimal decision rule $\delta^{(j)*}(\cdot|h^{(j)})\in \bar{\Gamma}^{(j)}$, the attacker's optimal mixed strategies of generating manipulated messages $u^{(j)*}_1(\cdot|h^{(j)}),u^{(j)*}_0(\cdot|h^{(j)})\in [0,1]^S$ and the posterior belief $p(\cdot|h^{(j)})$ depend only on the attacker's history of mixed strategies. Specifically, we have for $k\in\{0,1\}$,
\begin{align}
    {\bar{\delta}}^{(j)*}(\psi_j|h^{(j)}) &= {\bar{\delta}}^{(j)*}(\psi_j|\psi^{(j)}),
    \label{action_hisotry_dependent}\\
     u^{(j)*}_k(\cdot|\;h^{(j)}) &=   u^{(j)*}_k(\cdot|\;\psi^{(j)}),
     \label{msg_history_dependent}\\
    p(H_k|h^{(j)}) &=  p(H_k|\;\psi^{(j)}).\;
    \label{belief_system}
\end{align}
\end{assume}
The three assumptions \eqref{action_hisotry_dependent}\eqref{msg_history_dependent}\eqref{belief_system} imply the only useful information accumulated throughout stages is the attacker's mixed strategies. The multi-stage game $\mc{G}^{N_1}$ based on the base game $\mc{G}$ is played as follows: before stage 1, the defender observes from Nature his type ($H_0$ or $H_1$); at the beginning of stage $j\in [N_1]$, the sensor observes the sender's message and sends prospect state $\Ket \in \mc{H}$ according on his mixed strategies $\rho^{(j)}_1,\rho^{(j)}_0\in B(\mc{H})$ to the human agent, who makes a decision based on the current prospect state and the  history of prospect states $\psi^{(j)}\in \mc{H}^{\otimes j}$ regarding the defender's type. 
At stage $j\in [N]$, we are now ready to define the human agent's hypothesis testing game problem for the human agent as 
\begin{equation}
    \begin{aligned}
        \underset{\delta^{(j)}  \in \bar{\Gamma}^{(j)}}{\max}&\;\delta^{(j)}(\psi_j)  p(H_1|\;\psi^{(j)}), \\
        \text{s.t.} &\; \delta^{(j)}(\psi_j)p(H_0|\psi^{(j)})<\beta^{(j)}.
        \label{prob:multi_stage_human}
    \end{aligned}
\end{equation}
We still inherit the substitutions and characterize the defender's strategies at stage $j$ as the pairs $u^j_1,u^j_0\in \mR^S$. Then we can equivalently express the defender's problem at stage $j$ upon knowing $\delta^{(j)*}$ as follows: 
\begin{equation}
\begin{aligned}
  \underset{u^j_1,u^j_0\in \mR^S}{\max}\;  \sum_{s \in \mc{R}^j_s }{f_1(s)u^j_1(s)}
          \label{prob:multi_stage_defender}
\end{aligned}
\end{equation}
with $R^j_s =\{s:  f_1(s)u^j_1(s)>\tau f_0(s)u^{j}_0(s)\}$.
We now argue that the sequential perfect Bayesian Nash equilibrium (s-PBNE) by applying one-shot deviation principle \cite{fudenberg1998game} into solving \eqref{prob:multi_stage_defender} and \eqref{prob:multi_stage_human}.
\begin{prop}
\label{prop:dynamic_equilibrium_strategy}
    Let $\mc{G}^N$ be the multistage game of finite horizon $N$. Let the assumption \ref{assume:no_change_classical_message} hold. The samples of signals generated during the $j$ stages are denoted as $\{s_t\}_{t\leq j}$. Then we derive the sequential perfect Bayesian Nash equilibrium as the following tuple $\langle u^{j*}_{1}, u^{j*}_{0},\delta^{(j)*}, p\rangle$ as 
    \begin{align}
        u^{j*}_{1}(s) & =\begin{cases}
            0 & \tau^{(j)}_s>1, \\
            1 - \tau^{(j)}_s & \mbox{otherwise}.
        \end{cases}
        \label{defender_strategy_0_time}
        \\
        u^{j*}_{0}(s)& = \begin{cases}
             0 & \tau^{(j)}_s>1, \\
            1  & \mbox{otherwise}.
        \end{cases}
                \label{defender_strategy_1_time}
        \\
        \delta^{(j)*}(\psi_j|\;h^{(j-1)})& =  \begin{cases}
            1&  \prod_{t\leq j-1}{\frac{f_1(s_t) (a^{(t)}_{sk})^2}{f_0(s_t) ((b^{(t)}_{sk})^2 }}> \Big(\frac{1}{\beta^j}-1\Big)\frac{p(H_0)}{p(H_1)}, \\
        %\frac{\beta}{p(H_0|\Phi)} &  \mbox{otherwise}
        0 &  \mbox{otherwise}
        \end{cases}
      \label{attacker_strategy_time}
    \end{align}
\end{prop}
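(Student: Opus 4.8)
The plan is to invoke the one-shot deviation principle for the finite-horizon game $\mc{G}^N$, which reduces the verification of an s-PBNE to checking that at each stage $j$ neither player can gain by deviating in that single stage while conforming to the candidate profile thereafter. The crucial enabling fact is the action-independent assumption \eqref{action_hisotry_dependent}--\eqref{belief_system}: because the human agent's decision rule, the defender's mixed strategies, and the posterior belief all depend on the history only through the accumulated sequence $\psi^{(j)}$ (equivalently, the realized signals $\{s_t\}_{t\leq j}$), the per-stage best-response problems \eqref{prob:multi_stage_human} and \eqref{prob:multi_stage_defender} decouple across stages once the belief carried into stage $j$ is fixed. First I would make this decoupling precise, then solve each stage problem by importing the static results of Propositions \ref{prop:human_agent_optimal} and \ref{prop:generator_equil_strategy}.

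Next I would establish how the belief accumulates. Applying the belief-consistency rule \eqref{belief_system} recursively and using the diagonal elements $\brask \rho_1^{(t)} \ketsk = f_1(s_t)(a^{(t)}_{sk})^2$ and $\brask \rho_0^{(t)} \ketsk = f_0(s_t)(b^{(t)}_{sk})^2$ read off from \eqref{eq:rho1_prospect_state}--\eqref{eq:rho0_prospect_state}, the posterior odds after $j-1$ observations telescope into a product,
\begin{equation}
\frac{p(H_1\mid \psi^{(j)})}{p(H_0\mid \psi^{(j)})} = \frac{p(H_1)}{p(H_0)}\prod_{t\leq j-1}\frac{f_1(s_t)(a^{(t)}_{sk})^2}{f_0(s_t)(b^{(t)}_{sk})^2}.
\end{equation}
Substituting this accumulated odds ratio into the human agent's stage-$j$ Neyman--Pearson problem \eqref{prob:multi_stage_human} and repeating the likelihood-ratio argument of Proposition \ref{prop:human_agent_optimal} --- now with tolerance $\beta^{(j)}$ --- yields exactly the thresholded product rule \eqref{attacker_strategy_time}. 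The product form is therefore a direct consequence of sequential Bayesian updating rather than of any fresh optimization.

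For the defender's side, I would observe that the stage-$j$ objective \eqref{prob:multi_stage_defender} is structurally identical to the static objective \eqref{problem:P2}, with the region $\mc{R}^j_s = \{s: f_1(s)u_1^j(s) > \tau f_0(s) u_0^j(s)\}$ playing the role of the classical rejection region and the stage threshold $\tau^{(j)}_s$ replacing $\tau_s$. Applying Proposition \ref{prop:generator_equil_strategy} verbatim to this stage problem gives the case split $u_1^{j*}(s) = 0$, $u_0^{j*}(s) = 0$ when $\tau^{(j)}_s > 1$, and $u_1^{j*}(s) = 1 - \tau^{(j)}_s$, $u_0^{j*}(s) = 1$ otherwise, which are precisely \eqref{defender_strategy_0_time} and \eqref{defender_strategy_1_time}. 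Because the defender's type is frozen for the duration of the game and her per-stage reward depends only on the current-stage detection event, there is no coupling between stages in her payoff, so the stagewise-optimal strategy is a global best response against the attacker's rule.

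The main obstacle is justifying that the one-shot deviation principle genuinely applies and that stagewise optimization is without loss. Two points require care. First, I must confirm that the defender has no profitable multi-stage deviation that sacrifices current detection in order to steer the attacker's future belief; this follows once the action-independent assumption is used to show that the defender's continuation payoff is additively separable across stages and that each stage's payoff is governed solely by that stage's $(u_1^j,u_0^j)$ through the telescoped likelihood ratio. Second, I must verify sequential rationality off the equilibrium path, i.e.\ for \emph{every} admissible history $h^{(j-1)}$ rather than only those reached under the equilibrium; the action-independence assumption reduces this to checking single-stage optimality for arbitrary accumulated beliefs, which the static propositions already cover. Once these separability and off-path consistency checks are in place, a backward induction over $j = N, N-1, \dots, 1$ assembles the stagewise solutions into the claimed s-PBNE.
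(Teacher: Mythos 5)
Your proposal is correct and follows essentially the same route as the paper, whose proof simply invokes backward induction and stagewise solution of \eqref{prob:multi_stage_defender} and \eqref{prob:multi_stage_human} via the one-shot deviation principle. You supply considerably more detail than the paper does --- in particular the telescoping of the posterior odds into the product threshold and the explicit reuse of Propositions \ref{prop:human_agent_optimal} and \ref{prop:generator_equil_strategy} at each stage --- but the underlying argument is the same.
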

\begin{proof}
    We can derive the equilibrium by backward induction \cite{zamir_game_theory_cambridge}, alternatively solving the optimization problem for every stage $j\in [N]$.
\end{proof}

The equilibrium results in proposition \ref{prop:dynamic_equilibrium_strategy} implies how the defender should change the way of configuring prospect states produced by the generator based human attacker's action history and similarly, how the human attacker adopts her optimal decision threshold based on the history of classical signals received.

\begin{comment}
    \paragraph{Expected result:}
We expect the following results:
\begin{enumerate}
    \item If the sensor system believes that the human agent is more of an attacker rather than a normal user, then it will design strategies attracting the agent to access the decoyv and vice versa.
    \item  Hypothesis 2 in \cite{ferguson2021decoy_psychology_deception} holds true under our model: a decoy is effective even if the attacker is announced the existence of a decoy. 
    \item By manipulating the `attraction factor' in quantum prospect probabilities, the defense system manages to adopt a cyber-psychology method to change the human agent's mental state
\end{enumerate}
\end{comment}

\section{Case Study: honeypot detection}
\label{sec:case_study}

In this section, we apply the proposed cyber deception scheme discussed in section \ref{sec:formulation} to implement cyber-psychological techniques to build next-generation honeypots \cite{spitzner2003honeypots_insider_threat} to mitigate inside human attacks. A honeypot is a monitored and regulated decoy disguised as a valuable asset to attract attackers to compromise so as to detect, deflect, and to gather information for cyber attacks in networks. According to \cite{camerer2011behavioral_game_honeypot}, honeypots can help enhance system security in the following ways: to begin with, honeypots squander attacker's resources without giving away valuable information in return; also, honeypots serve as an intrusion detection node, providing warnings for system administrative; last but not least, once compromised, honeypots provide useful information for network administrative to analyze on the attacker.  However, honeypots can also be identified by proactive attackers and become ineffective, especially when they are at fixed locations and isolated from the network system. Attackers can adopt proactive detection techniques, such as those in \cite{hu2022_evasion_game_CDC}, to identify honeypots more accurately and further either implement anti-honeypot techniques \cite{krawetz2004anti_honeypot_technology}. Here inspired by the experiments introduced in \cite{ferguson2021decoy_psychology}, we undermine the attacker's performance in identifying honeypots using cyber-psycholoigical techniques. Specifically, we adopt generators to produce verbal messages to change the perception of attacker's judgment upon the type of the sensors that they receive traffic data from. 

\subsection{The dataset}
To simulate normal traffic and honeypot-related traffic, we select a portion of the KDD cup 1999 dataset \cite{misc_kdd_cup_1999_data_130}, which was generated partially for the 1998 DARPA intrusion detection program. The raw dataset is binary, containing five million connection records from 7 weeks of network traffic. 
There are in total $N= 494021$ pieces of connection records in our selected dataset, of which $396093$ come from honeypot-related traffic. We assume that all attack traffic is attracted by honeypots and all normal traffic is collected by normal sensors since regular users have no reason to access honeypots. Thus we can estimate a prior belief regarding the type of sensors as 
$p(H_1) \approx 0.802,\; p(H_0)  \approx 0.198.$

The signal $s$ of log-in attempts as a feature obeys Bernoulli distribution: $s=0$ means that log-in attempt is successful and $s=1$ means the log-in attempt is a failure. Honeypots and normal sensors respond failure/success to attacker's log-in attempts with different parameters $\tha_1,\tha_0$, i.e. 
  \begin{equation}
\begin{aligned}
    g_1(1)  &= \tha_1,\; g_1(0) =  1-\tha_1,\; \\ g_0(1)  &= \tha_0,\; g_0(0) = 1-\tha_0.
    \label{g_1g_0}
\end{aligned}
\end{equation} 
In our selected dataset, there are $69939$ out of $97928$ successful log-in attempts in normal traffic, while there are $3298$ out of $396093$ successful log-in attempts in honeypot traffic. Thus we select $\tha_0 \approx 0.719, \tha_1 \approx 0.008$.
We adopt the scheme in Figure \ref{fig:human_sensor_system_b} to implement psychological techniques to mislead attackers by creating perceptions of messages based on the traffic data. 
For every message $s$, the generator produces verbal messages leading the human attackers to generate perceptions regarding the type of sensor. 

We denote the number of base states in the mind space to be $K$, indicating how likely the human attacker thinks the defender is a honeypot: upon receiving the classical signal $s$, the state $|s\varphi_1\rangle$ indicates that human thinks the defender is the most likely a honeypot, while the state $|s\varphi_K\rangle$ represents that human thinks the defender is the least likely the honeypot. If the generator wants to keep the human agent uninformed, as described in \cite{ferguson2021decoy_psychology}, he can produce a prospect state as a uniform superposition of all the basis as follows:
    \begin{equation}
        |\Phi(s)\rangle = \frac{1}{\sqrt{K}}\sum_{j=1}^K{|s\varphi_j \rangle }.
    \end{equation}
    
For every realization of message $s$, the generator designs optimal type-dependent prospect state coefficients $a^*,b^*\in \mR^{S\times K}$ in \eqref{util_1}-\eqref{attr_0} via PBNE in proposition \ref{prop:generator_equil_strategy}, indicating perception of the likelihood of honeypot the defender imposes the human agent upon delivering the message $s$. We also analyze the optimal decision rules of human agents under the verbal messages of generators at equilibrium. 

\subsection{Numerical Results}
We select parameters $\beta = 0.4,\zeta=0.2$ and the number of base states in the mind space $K=4$. In Figure \ref{fig:u1s_u0s_prior}, we plot the cyber defender's optimal strategies $u^*_0,u^*_1$ at equilibrium in terms of various choices of $\beta$. We observe that in the classical rejection region (i.e. the space of signals that causes a human attacker to identify that the sensor is a decoy), the generator in the defender system produces perceptions leading to only `rejects' with a certain probability. In Figure \ref{fig:ask_bsk_one_shot} we plot the defender's strategies at equilibrium in terms of the coefficients $a,b$ of the prospect states produced by the generators. The coefficients suggest an optimal way of mixing different weights of psychological minds regarding every classical signal $s$. We observe that when $\beta$ becomes close to $1$, the defender's equilibrium strategies are close to $u_1(0) = 1,u_0(0) = 1$. On the other hand, if $\beta$ is close to $0$, the defender's strategies converge to $u_1 = 0, u_0 = 0$, corresponding to the upper right and lower left corner of the ROC curves (to be described later) characterizing the detection performance. 

\begin{figure}
    \centering
 \includegraphics[width=0.9\linewidth]{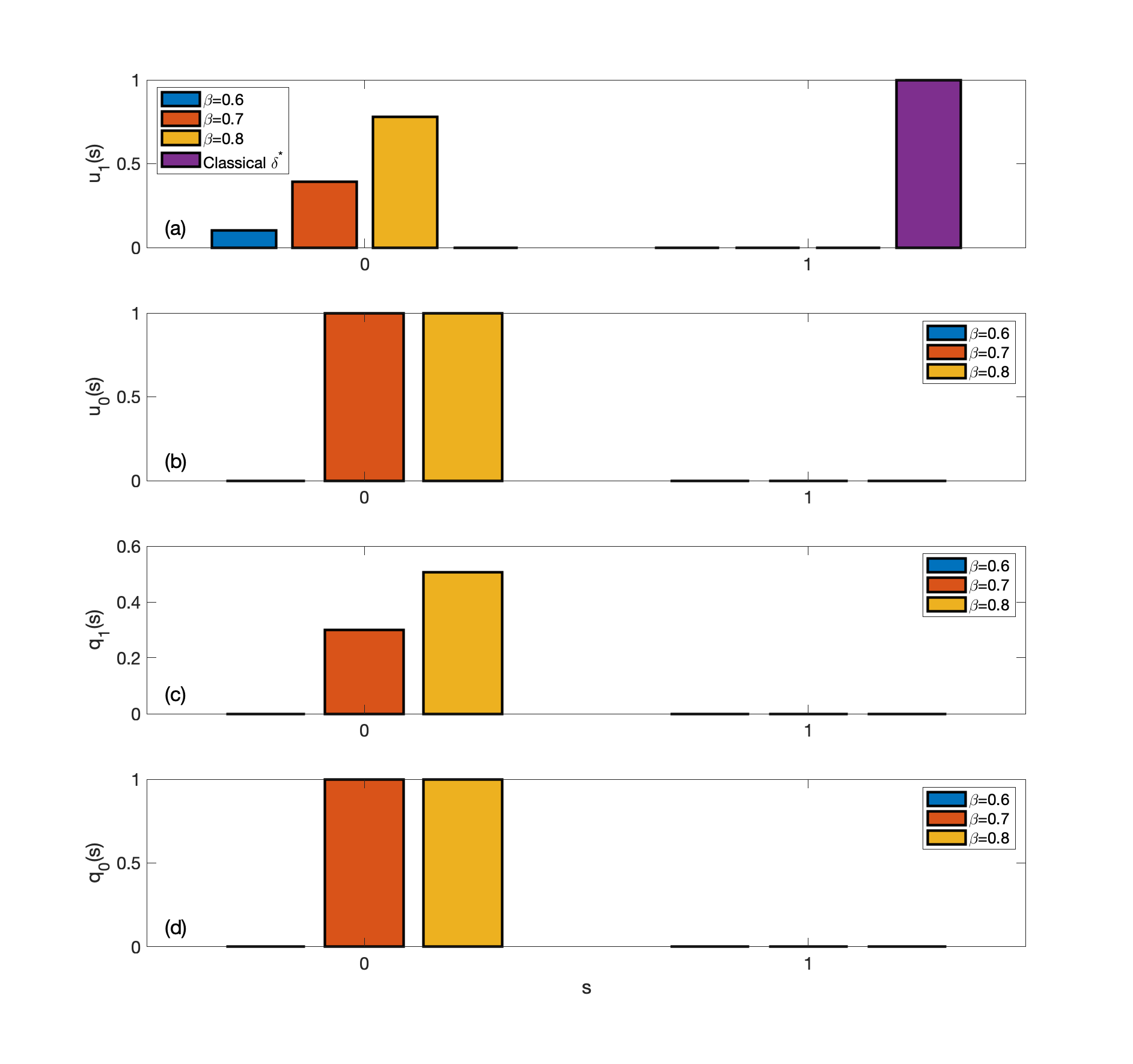}
    \caption{The defender's optimal strategies $u^*_1$ (upper figure) and $u^*_0$ (lower figure) at PBNE in $\mc{G}$ under different choices of $\beta$. We set the calibration parameter $\zeta=0.2$ and the tolerance $\beta=0.4$. The classical signal obeys truncated Gaussian as in \eqref{g_1g_0} with support of length $S=2$. The dimension of mind states $K=4$.}
    \label{fig:u1s_u0s_prior}
\end{figure}

\begin{figure}
    \centering
    \includegraphics[width=0.6\linewidth]{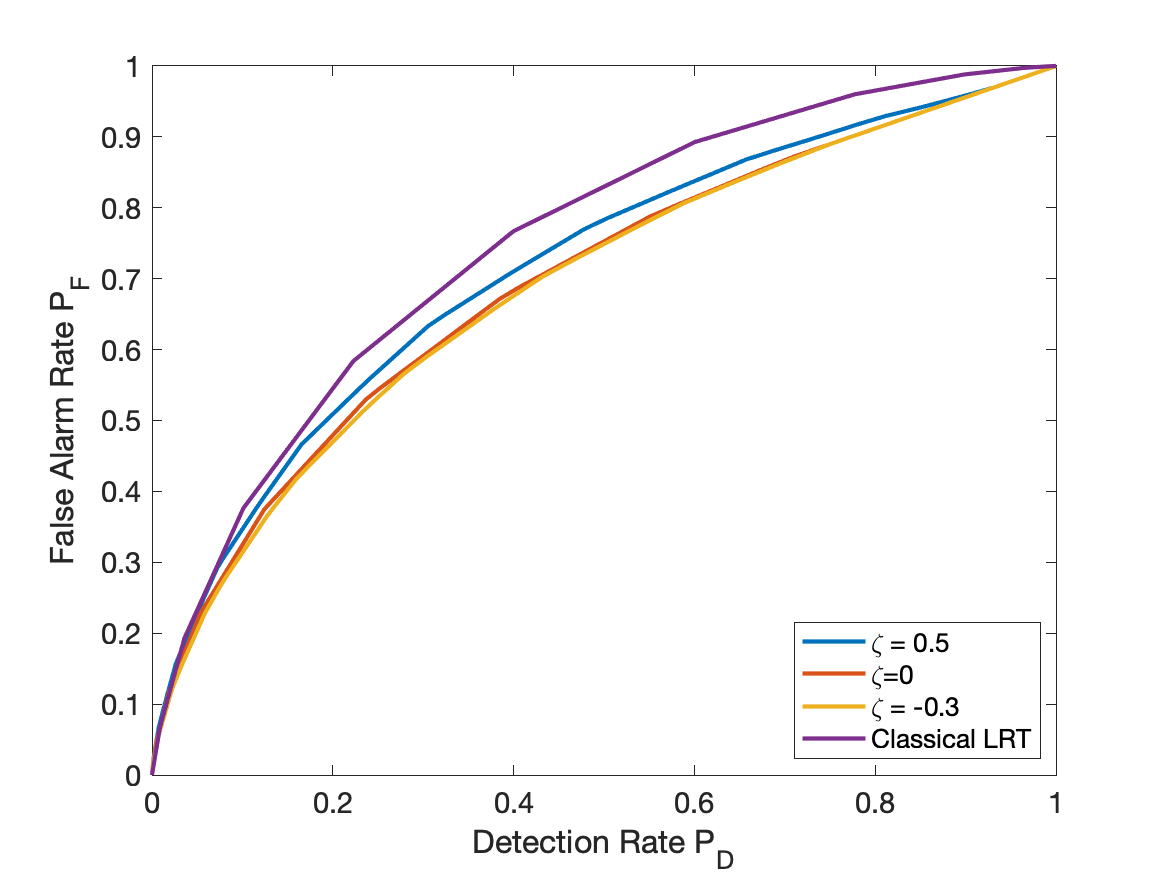}
    \caption{ROC curves human agent's detection performance $\zeta$. We choose distributions of classical signals under each state as $g_1,g_0$ given in \eqref{g_1g_0}}
    \label{fig:ROC_curve}
\end{figure}

\begin{figure}
    \centering
\includegraphics[width=0.8\linewidth]{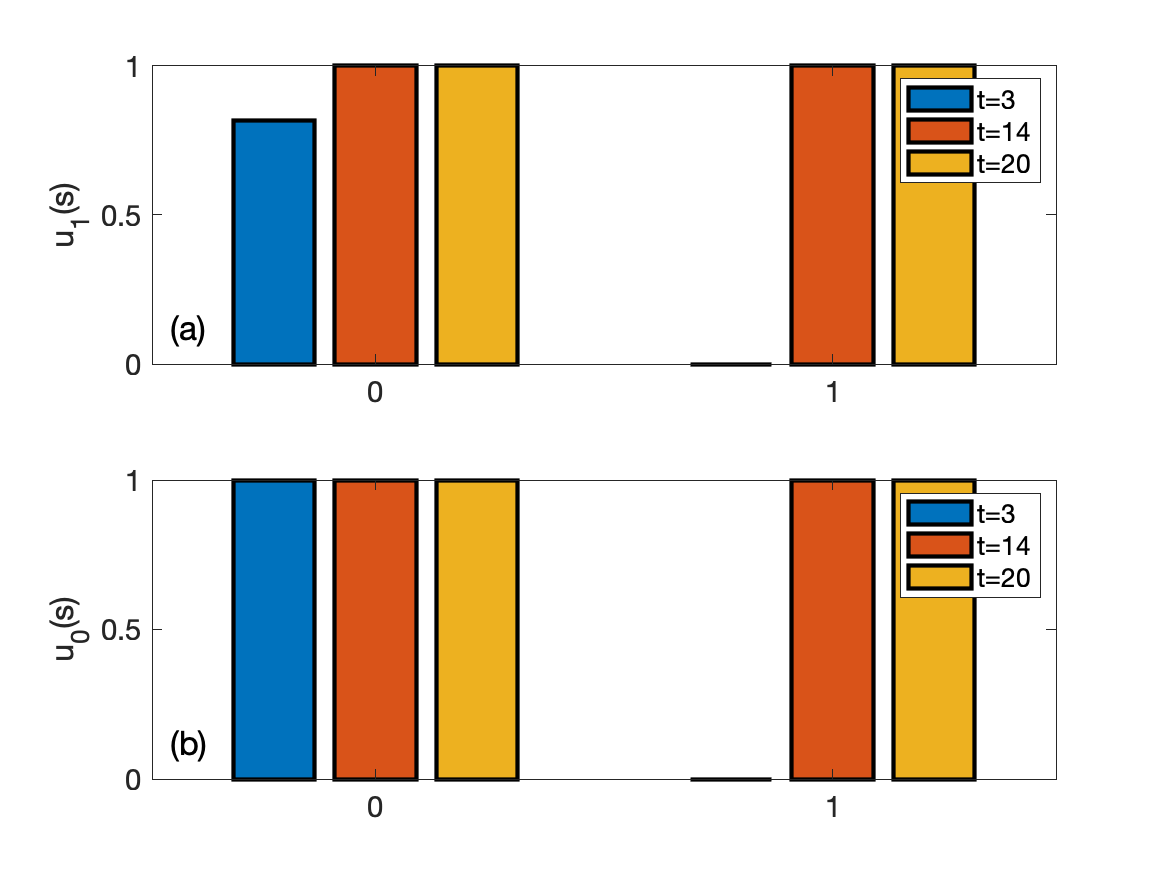} \\
\includegraphics[width=0.8\linewidth, height = 0.5\linewidth]{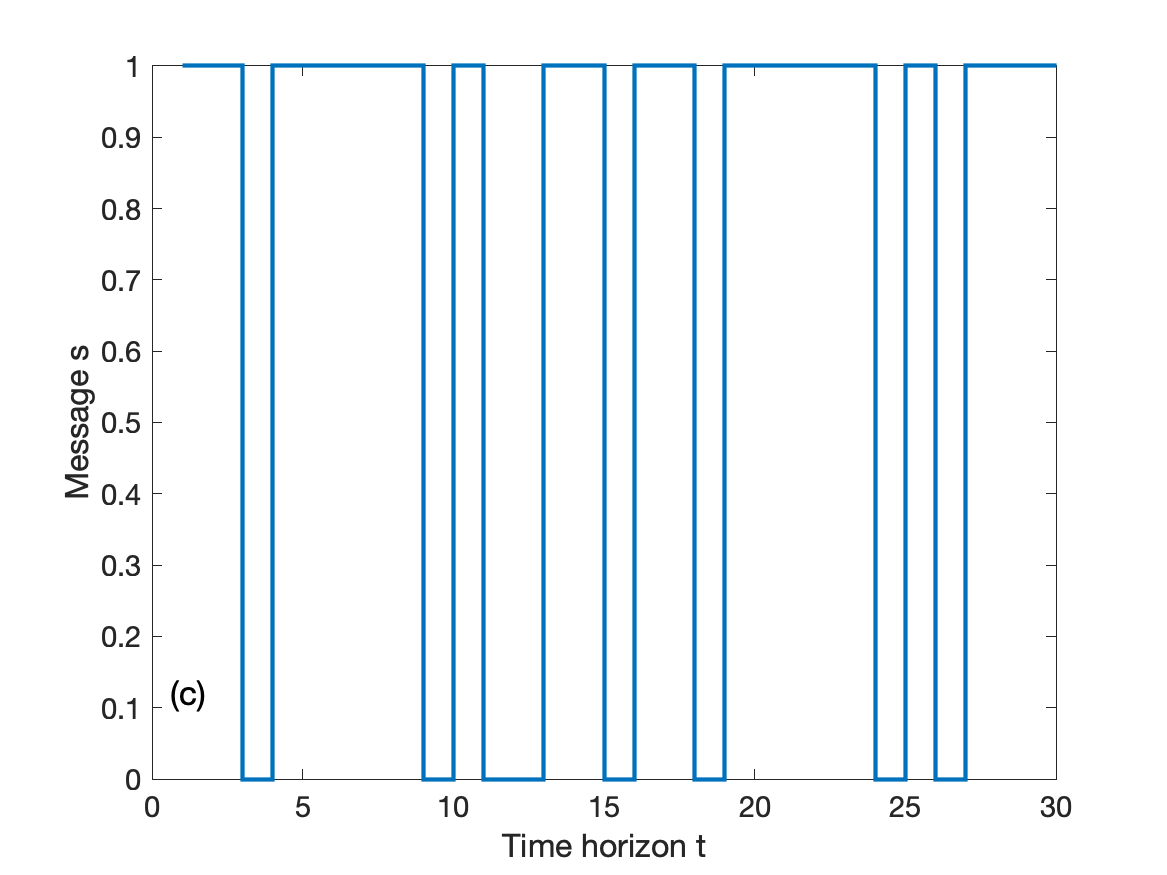} 
    \caption{The defender's optimal strategies $u_1,u_0$ at sequential PBNE in multi-stage game $\mc{G}^N$: a)b) the attacker's optimal strategies $u_1(s;t),u_0(s;t)$ at equilibrium; c) the history of realization of classical signals that the defender receives from Nature. We adopt the setting in \eqref{g_1g_0} and fix $\zeta= 0.5$.}
    \label{fig:u0s_u1s_sample_s}
\end{figure}

\begin{comment}
    \begin{figure}
    \begin{subfigure}
       \centering
    \begin{tabular}{c c}
\includegraphics[width=0.25\linewidth]{} & \includegraphics[width=0.25\linewidth]{} \\
\includegraphics[width=0.25\linewidth]{} & \includegraphics[width=0.25\linewidth]{}
    \end{tabular}
     \caption{The history of coefficients for prospect states $a_{sk},b_{sk}$ regarding the classical message $s=1$. }
    \end{subfigure}
    \label{fig:a_sk_b_sk_dynamic}
\end{figure}
\end{comment}

\begin{figure}
    \begin{subfigure}
       \centering
    \begin{tabular}{c c}
\includegraphics[width=0.45\linewidth]{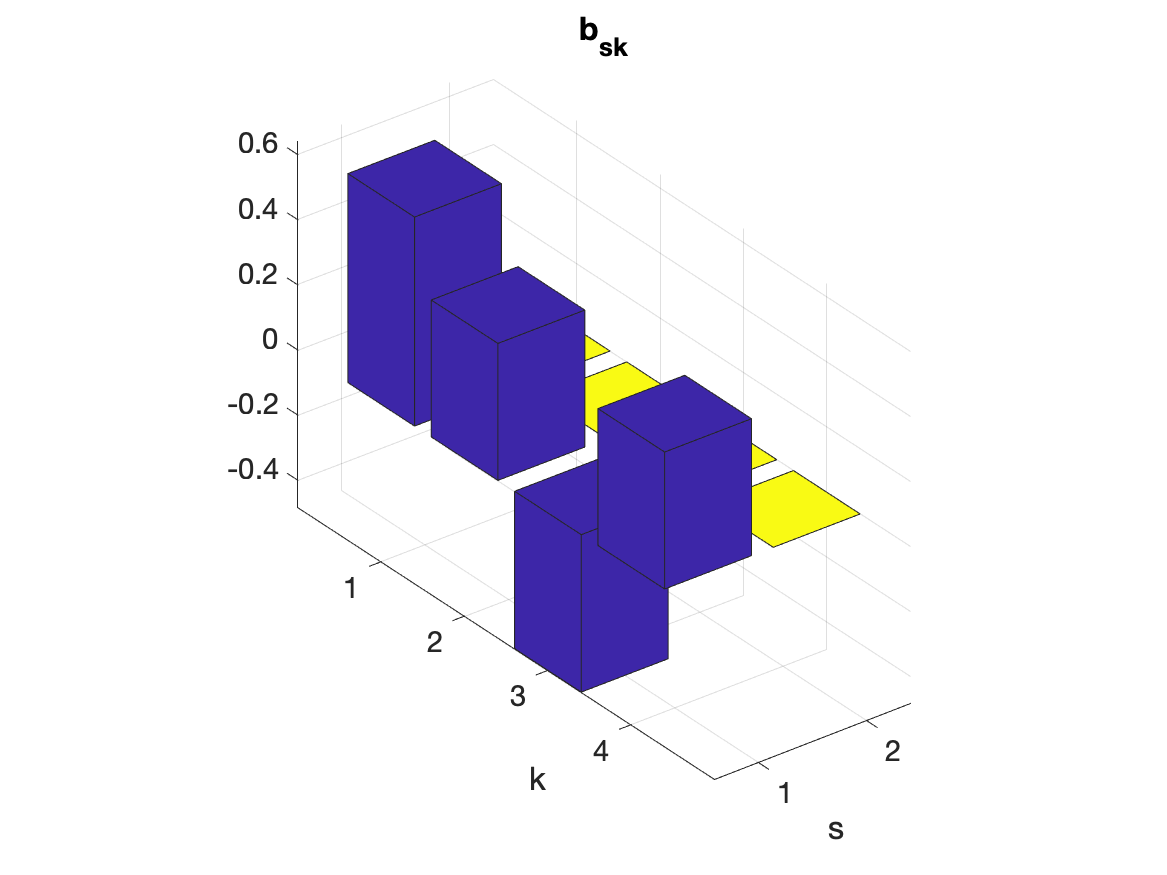} & \includegraphics[width=0.45\linewidth]{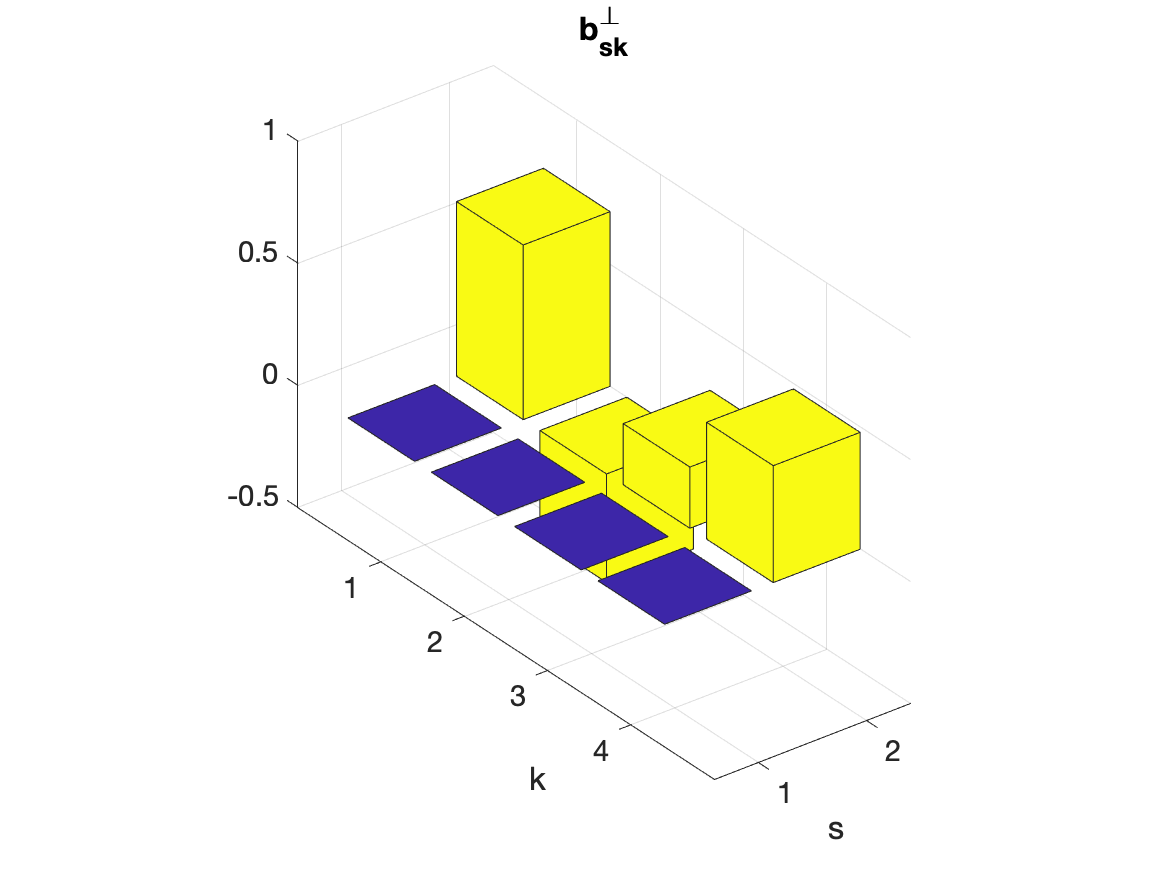} \\
\includegraphics[width=0.45\linewidth]{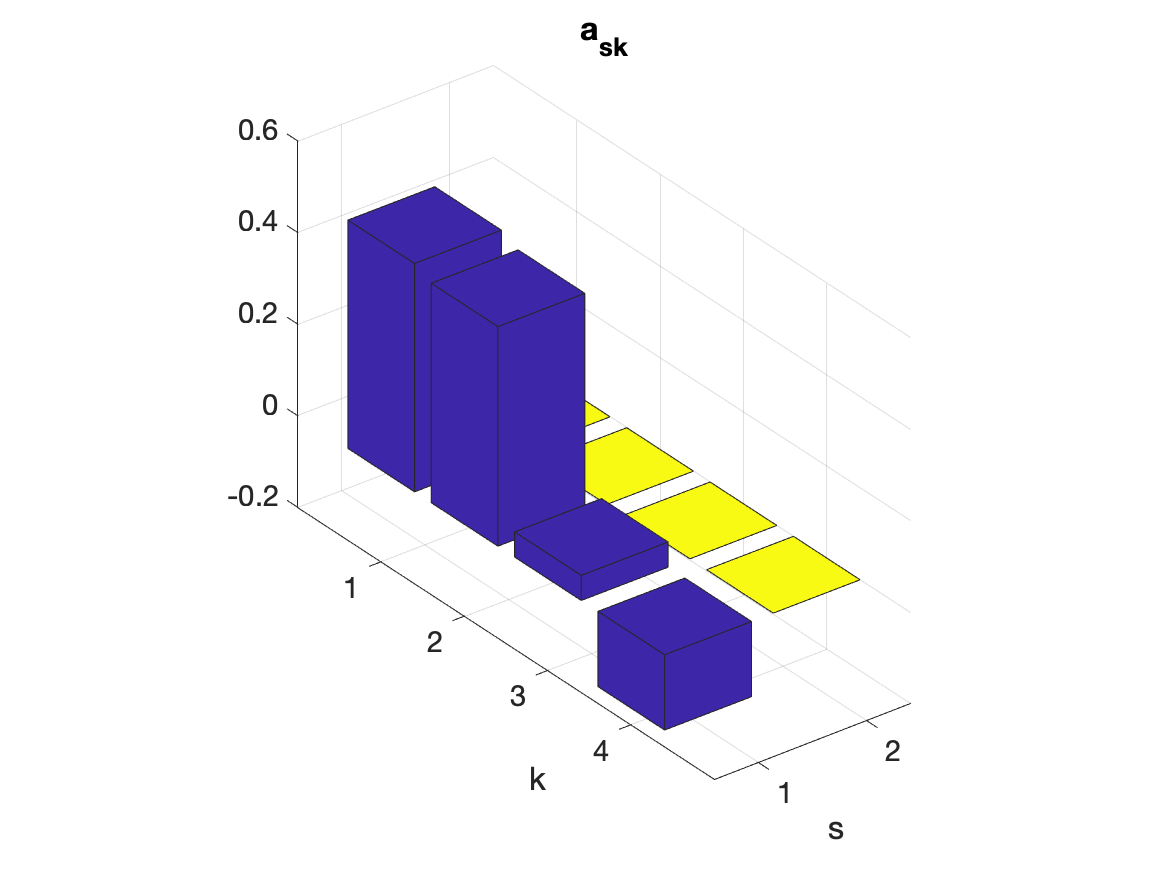} & \includegraphics[width=0.45\linewidth]{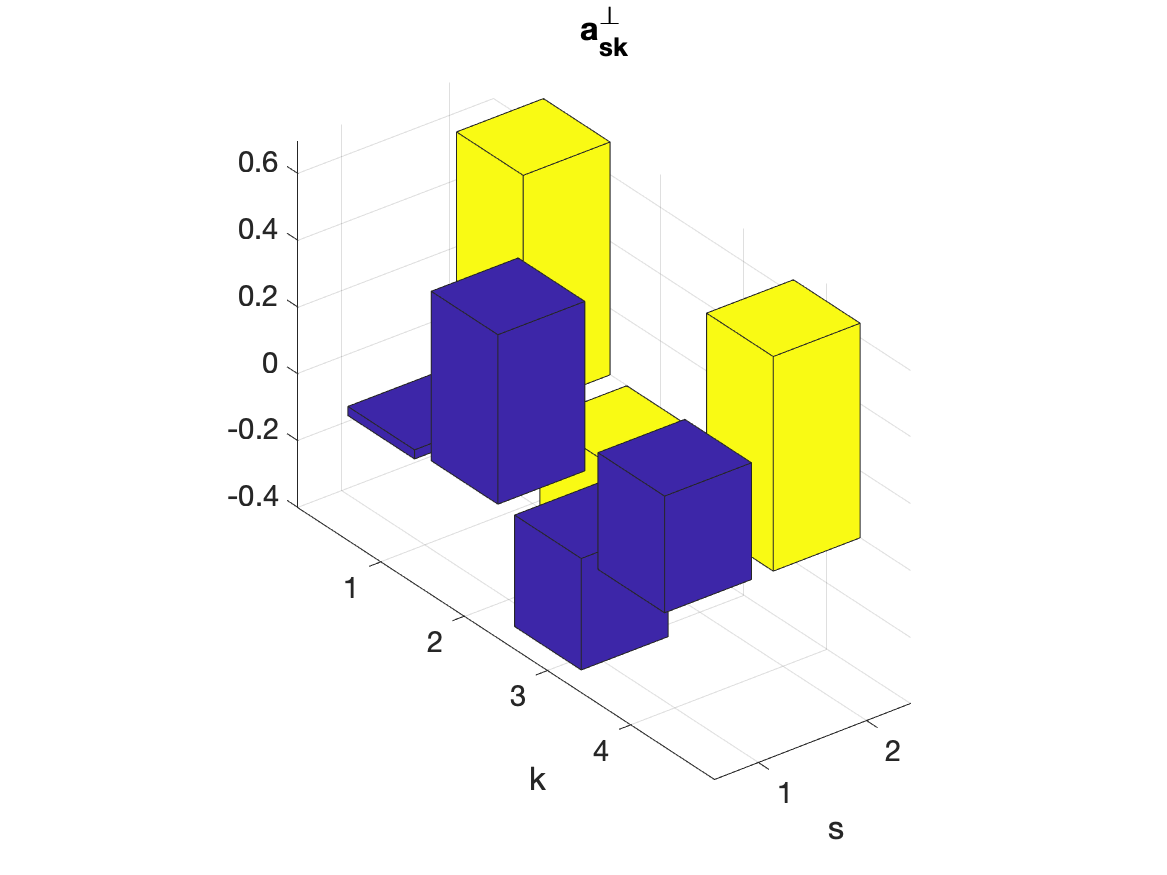}
    \end{tabular}
     \end{subfigure}
     \caption{The history of coefficients for prospect states $a_{sk},b_{sk},a^{\perp}_{sk},b^{\perp}_{sk}$ for the static travesty game $\mc{G}$. The classical message $s$ obeys the distribution $g_0,g_1$ under hypothesis $H_0,H_1$ respectively. We set $S = 2$ and $K=4$ and the prior belief $p(H_1) = 0.802$.}
    \label{fig:ask_bsk_one_shot}
\end{figure}

In Figure \ref{fig:ROC_curve},  we plot the receiver-operational-characteristic (ROC) curve. We observe that depending on different calibration parameters $\zeta$, the human agent's detection performances vary, but in general are all worse than fully rational counterparts. In particular, higher $\zeta$ leads to better detection performance, as higher $\zeta$ the quantum inference strengthens the probability of correct identification when the decoy sensor is connected to the human agent. 

\subsection{Multi-stage analysis}
To have a better understanding of the 
In Figure \ref{fig:u0s_u1s_sample_s}, we plot the evolution of defender's optimal strategies $\{u^{t*}_1,u^{t*}_0\}_{t\in[N]}$ (that is, optimal type-dependent utility factors) through time at equilibrium as introduced in \eqref{defender_strategy_1_time}\eqref{defender_strategy_0_time}. We select the time horizon $N=30$ and fix the prior belief $p(H_1),p(H_0)$. We observe that the defender's stage equilibrium strategies converge to a pooling strategy $u_0=u_1 = 1$, suggesting that the attacker can make the prospect states totally uninformative to the human agent by designing false perceptions upon the signals.

\section{Conclusion}
\label{sec:conclusion}
In this work, we have proposed the game of travesty (TG) to design a novel defensive deception to combat the proactive detection of decoys from insider human attackers. The defensive deception system is 
a signaling game where the defender consists of a sensor or a decoy cascaded by a generator, which converts classical signals into prospect states to manipulate the perception of messages into human attackers. 
We have analyzed the behaviors of the inside human attacker as well as the defender by computing the perfect Bayesian Nash equilibrium. Furthermore, we analyze the human attacker's performance of detecting decoys at equilibrium and compare it with the ones without manipulation of perceptions of classical signals. We have illustrated via ROC curves that the insider human attacker performs worse than the ones with full rationality, giving the defender more room to evade detection when she implements decoys in the network.   

\bibliographystyle{IEEEbib}
\bibliography{thesis}

\section{Appendix}
\subsection{Proof of proposition in \ref{prop:equiv_optimization_defender}}
\label{appd:equiv_optimization_defender}
\begin{proof}
    We can prove its contrapositive. Let $a^*,b^*$ be an optimal solution of the problem \eqref{problem:P1}. Denote $u^*_1(s) = \tr(a^*_s a^{*T}_s), u^*_0(s) =\tr(b^*_s b^{*T}_s) $.  If there exists $\tilde{u}_1,\tilde{u}_0\in [0,1]^S$ leading to a smaller objective, i.e. $u^*_1,u^*_0\in [0,1]^S$ are the optimal solution of \eqref{problem:P1}, then by assumption \ref{assume:existence_a_b_coefficient}, we can find $\tilde{a}\in \mR^{S\times K},\tilde{b}\in \mR^{S\times K}$ such that 
    $\tilde{u}_1 = \tr(\tilde{a}\tilde{a}^T),\;\tilde{u}_0 = \tr(\tilde{b}\tilde{b}^T)$. In this way, we derive 
    \begin{equation}
         J_D(\tilde{a},\tilde{b}) = F(\tilde{u}_1,\tilde{u}_0)< F(u^*_1,u^*_0) = J_D({a}^*,{b}^*).
    \end{equation}
   The last inequality reveals that $a^*,b^*$ is not the minimizer of $F$ in the optimization problem, contradicting our assumption. So $u^*_1,u^*_0$ must be the optimal solution of the optimization problem \eqref{problem:P2}.
\end{proof} 

\subsection{Proof of the proposition in \ref{prop:human_agent_optimal}
}
\begin{proof}
    We could discuss by cases. If $p(H_0|\;\ketsk)<\beta$, picking $\delta(\ketsk) = 1$ or $0$ both meet the inequality constraint, but obviously $\delta(\;\ketsk) = 1$ leads to a higher objective function value. If on the other hand $p(H_0|\;\ketsk)\geq \beta$, we must pick $\delta(\ketsk) = 0$ so as not to violate the constraint. Now we rewrite the inequality constraints by substituting the posterior belief with its expression in \eqref{posterior_belief} as 
    \begin{equation}
        p(H_0|\;\ketsk)<\beta \Leftrightarrow \frac{\brask \rho_1 \ketsk }{\brask \rho_0 \ketsk}>\frac{1}{\beta}\frac{p(H_1)f_1(s)a^2_{sk}}{p(H_0)f_0(s))b^2_{sk}} 
    \end{equation}
    Denoting $\tau_{sk} =\frac{p(H_1)}{p(H_0)} \frac{f_1(s)a^2_{sk}}{f_0(s)b^2_{sk}}$, we can rewrite condition for optimal action $\delta(\ketsk)$ to be 1 as 
    \begin{equation}
        \brask \rho_1 - \tau_{sk}\rho_0 \ketsk > 0.
        \label{ineq: optimal_decision_rule}
    \end{equation}
\end{proof}

\subsection{Proof of proposition in \ref{prop:generator_equil_strategy}
}
\label{appd:sol_signaling_lam_0}
\begin{proof}
    First of all, we notice the objective $F$ is separable in terms of every component $s\in S$. Thus it suffices to minimize $f_1(s)u_1(s)$ for every $s$. Since $f_1,f_0$ are fixed, we only need to aim at minimizing $u_1(s)$ for every $s$ taking into the constraint characterizing the set $\mc{R}_s$. In the meantime, we remember $0\leq u_1(s)\leq 1$ for every $s$. Thus we first try if there is a chance that $u_1(s)=0$ if not we can then pick a positive number for $u_1(s)$ that is as small as possible.
    
    Denoting $\tau_s = \frac{f_0(s)p(H_1)}{f_1(s)p(H_0)}(\frac{1}{\beta}-1)$, we can discuss by cases:
    \begin{enumerate}
        \item When $\tau_s>1$, we can pick $u_1(s) = u_0(s) =0$ and thus $\mc{R}_s = \varnothing$. On the other hand $1 = 1-u_1(s)<\tau_s( 1-u_0(s)) = \tau_s$ also holds.  
        \item When $0<\tau_s<1$, the previous discussion does not directly carry over as 
        $1-u_1(s)\leq \tau_s$ therefore $u_1(s)\geq 1 - \tau_s$. Thus the optimal solution $u^*_1(s)\geq 1 - \tau_s$. Correspondingly, $1-u^*_0(s)=1$ so $u^*_0(s) = 0$. Thus we conclude the proof.
    \end{enumerate}
\end{proof}
\subsection{Proof of proposition \ref{prop:PD_PD_bar}}
\label{appd:PD_PD_bar}
\begin{proof}
    We refer to the expression \eqref{PD_PF_classical} and \eqref{PD_PF_quantum} and discuss via different regimes of $\tau_s$ as follows:
    \begin{equation}
        P_D(\tau) = \sum_{s}{f_1(s)(u_1(s)+q_1(s))}
        \end{equation}
        \begin{enumerate}
            \item  When $\tau_s<1$, or equivalently $\frac{f_1(s)}{f_0(s)}<\frac{p(H_1)}{p(H_0)}(\frac{1}{\beta}-1)=\tau$, the signal $s$ lies in the classical region of acceptance. The probability that $s$ leads to a rejection decision is zero both in classical decisions and in our proposed framework.
            \item   When $\tau_s>1$, or equivalently $\frac{f_1(s)}{f_0(s)}>\frac{p(H_1)}{p(H_0)}(\frac{1}{\beta}-1)=\tau$, the signal $s$ leads to a rejection decision in classical setting. In our framework, it leads to the defender's equilibrium strategy $u^*_1(s) = 1-\tau_s$. Therefore we have 
            \begin{equation}
            \begin{aligned}
                {P}_D(\tau) &= \sum_{s}{f_1(s)(u_1(s)+q_1(s))} \\
                &= \sum_{s,0<\tau_s<1}{f_1(s)(1-\frac{1}{\tau_s})(1+\zeta)} \\
                &<\sum_{s,0<\tau_s<1}{f_1(s)(1+\zeta)}=: \bar{P}_D(\tau)(1+\zeta).
            \end{aligned}
            \end{equation}
        \end{enumerate}
        That concludes the proof. 
\end{proof}

\end{document}